\tikzstyle{vertex}=[fill=white, draw=black, shape=circle, inner sep=2pt]
\tikzstyle{new style 0}=[fill=black, draw=black, shape=circle, inner sep=0.5pt]
\tikzstyle{Triangle}=[fill={rgb,255: red,191; green,191; blue,191}, draw={rgb,255: red,128; green,128; blue,128}, regular polygon sides=3, regular polygon]
\tikzstyle{new edge style 0}=[-, draw=black, ultra thick]
\tikzstyle{red}=[-, draw=red, ultra thick]
\tikzstyle{green}=[-, draw={rgb,255: red,0; green,243; blue,0}, ultra thick]
\tikzstyle{gray}=[-, draw={rgb,255: red,177; green,177; blue,177}, dashed]
\tikzstyle{dotted}=[-, draw={rgb,255: red,128; green,128; blue,128}, densely dotted]
\tikzstyle{shaded}=[-, draw={rgb,255: red,191; green,191; blue,191}]
\tikzstyle{canonical line}=[-, thick, draw=black, preaction={{ draw,cyan,-, double=cyan, double distance=3\pgflinewidth}
\tikzstyle{wave}=[decorate, -, thick, decoration=snake]
\tikzstyle{arrow}=[->, thick, dashed]
\tikzstyle{golden}=[-, dashed, draw={rgb,255: red,255; green,218; blue,9}, thick]
\tikzstyle{sheaf}=[-, preaction={{draw,yellow,-, double=yellow, double distance=3\pgflinewidth}}, thick]
\tikzstyle{blue}=[-, ultra thick, draw={rgb,255: red,15; green,123; blue,255}]
\tikzstyle{new edge style 1}=[-, ultra thick, draw={rgb,255: red,255; green,218; blue,7}]
\newcommand{\citet}[1]{\cite{#1}}
\newcommandx{\unsure}[2][1=]{\todo[linecolor=green,backgroundcolor=green!25,bordercolor=green,#1]{\normalsize #2}}
\newcommandx{\improvement}[2][1=]{\todo[inline,linecolor=blue,backgroundcolor=blue!05,bordercolor=blue,#1]{\normalsize #2}}
\newcommandx{\info}[2][1=]{\todo[linecolor=yellow,backgroundcolor=yellow!25,bordercolor=yellow,#1]{#2}}
\newcommandx{\floatmodel}[2][1=]{\todo[inline,linecolor=red,backgroundcolor=yellow!25,bordercolor=yellow,#1]{#2}}
\newcommandx{\thiswillnotshow}[2][1=]{\todo[disable,#1]{#2}}
\newcommandx{\celine}[2][1=]{\todo[inline,linecolor=green,backgroundcolor=green!25,bordercolor=green,caption={\normalsize \textbf{Celine}},#1]{\normalsize #2}}
\newcommandx{\karol}[2][1=]{\todo[inline,linecolor=blue,backgroundcolor=blue!25,bordercolor=blue,caption={\normalsize \textbf{Karol}},#1]{\normalsize #2}}
\newcommandx{\jesper}[2][1=]{\todo[inline,linecolor=red,backgroundcolor=red!25,bordercolor=red,caption={\normalsize \textbf{Jesper}},#1]{\normalsize #2}}
\newcommandx{\michal}[2][1=]{\todo[inline,linecolor=gray,backgroundcolor=red!25,bordercolor=red,caption={\normalsize \textbf{Micha\l{}}},#1]{\normalsize #2}}
\newtheorem{theorem}{Theorem}
\newtheorem{definition}[theorem]{Definition}
\newtheorem{lemma}[theorem]{Lemma}
\newtheorem{corollary}[theorem]{Corollary}
\newtheorem{observation}[theorem]{Observation}
\numberwithin{theorem}{section}
\newcommand{\Oh}{\mathcal{O}}
\newcommand{\Os}{\Oh^{\star}}
\newcommand{\N}{\mathbb{N}}
\newcommand{\Z}{\mathbb{Z}}
\newcommand{\Ff}{\mathcal{F}}
\newcommand{\Rr}{\mathcal{R}}
\newcommand{\Ss}{\mathcal{S}}
\newcommand{\Cc}{\mathcal{C}}
\newcommand{\Mm}{\mathcal{M}}
\newcommand{\ol}{\overline}
\newcommand{\PCC}{{\sc{Partial Cycle Cover}}\xspace}
\newcommand{\child}{\mathsf{children}}
\newcommand{\parent}{\mathsf{parent}}
\newcommand{\subtree}{\mathsf{subtree}}
\newcommand{\tail}{\mathsf{tail}}
\newcommand{\broom}{\mathsf{broom}}
\newcommand{\cc}{\ensuremath{\mathtt{cc}}}
\newcommand{\tw}{\ensuremath{t}}
\newcommand{\td}{\ensuremath{d}}
\renewcommand{\leq}{\leqslant}
\renewcommand{\geq}{\geqslant}
\newcommand{\defproblem}[3]{
	\vspace{2mm}
	\vspace{1mm}
	\noindent\fbox{
		\begin{minipage}{0.95\textwidth}
			#1 \\
			{\bf{Input:}} #2  \\
			{\bf{Task:}} #3
		\end{minipage}
	}
	\vspace{2mm}
}
\title{Hamiltonian Cycle Parameterized by Treedepth in Single Exponential Time and Polynomial Space}
\date{}
\author{
    Jesper Nederlof\footnote{Utrecht University, The
    Netherlands, \texttt{j.nederlof@uu.nl}. Supported by
    the project CRACKNP that has received funding from the European
    Research Council (ERC) under the European Union’s Horizon 2020 research and
    innovation programme (grant agreement No 853234).}
    \and
    Micha\l{} Pilipczuk\footnote{Institute of Informatics, University of
    Warsaw, Poland, \texttt{michal.pilipczuk@mimuw.edu.pl}. Supported by
    the project TOTAL that has received funding from the European
    Research Council (ERC) under the European Union’s Horizon 2020 research and
    innovation programme (grant agreement No 677651).}
    \and
    Céline M. F. Swennenhuis\footnote{Eindhoven University of Technology, The
    Netherlands, \texttt{c.m.f.swennenhuis@tue.nl}. Supported by the Netherlands
    Organization for Scientific Research under project no. 613.009.031b.}
    \and
    Karol W\k{e}grzycki\footnote{Institute of Informatics, University of
    Warsaw, Poland, \texttt{k.wegrzycki@mimuw.edu.pl}. Supported by
    the grants 2016/21/N/ST6/01468 and 2018/28/T/ST6/00084 of the Polish National
    Science Center and project TOTAL that has received funding from the European
    Research Council (ERC) under the European Union’s Horizon 2020 research and
    innovation programme (grant agreement No 677651).}
}
\begin{document}

\maketitle

\thispagestyle{empty}
For many algorithmic problems on graphs of treewidth $\tw$,
a standard dynamic programming approach gives an algorithm with time and space complexity $2^{\Oh(\tw)}\cdot n^{\Oh(1)}$.
It turns out that when one considers the more restrictive parameter treedepth,
it is often the case that a variation of this technique can be used to reduce the space complexity to polynomial, while retaining time complexity of the form $2^{\Oh(\td)}\cdot n^{\Oh(1)}$, where $\td$ is the treedepth.
This transfer of methodology is, however, far from automatic. For instance, for problems with connectivity constraints, standard dynamic programming techniques give algorithms with time and space complexity
$2^{\Oh(\tw\log \tw)}\cdot n^{\Oh(1)}$ on graphs of treewidth $\tw$, but it is not clear how to convert them into time-efficient polynomial space algorithms for graphs of low treedepth.

Cygan et al. (FOCS'11) introduced the Cut\&Count technique and showed that a certain class of problems with connectivity constraints can be solved in time and space complexity $2^{\Oh(\tw)}\cdot n^{\Oh(1)}$.
Recently, Hegerfeld and Kratsch (STACS'20) showed that, for some of those problems, the Cut\&Count technique can be also applied in the setting of treedepth, and it gives algorithms with running time $2^{\Oh(\td)}\cdot n^{\Oh(1)}$ 
and polynomial space usage. However, a number of important problems eluded such a treatment, with the most prominent examples being {\sc{Hamiltonian Cycle}} and {\sc{Longest Path}}.

In this paper we clarify the situation by showing that 
{\sc{Hamiltonian Cycle}}, {\sc{Hamiltonian Path}}, {\sc{Long Cycle}}, {\sc{Long Path}}, and {\sc{Min Cycle Cover}} all admit $5^\td\cdot n^{\Oh(1)}$-time and polynomial space algorithms on graphs of treedepth $\td$.
The algorithms are randomized Monte Carlo with only false negatives.

\begin{picture}(0,0)
\put(462,-145)
{\hbox{\includegraphics[width=40px]{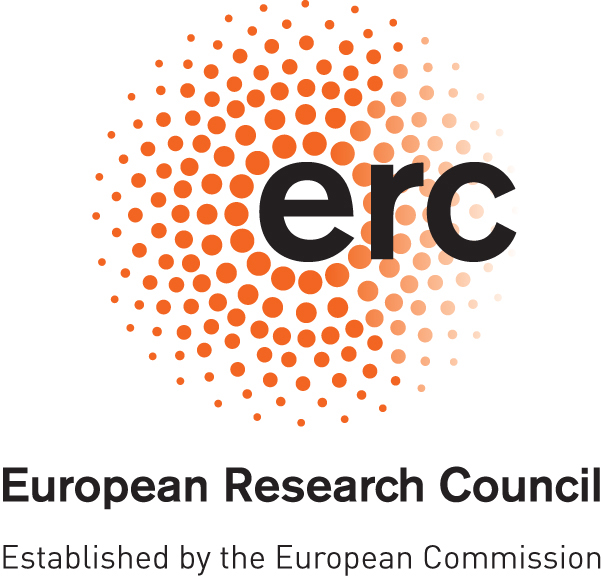}}}
\put(452,-205)
{\hbox{\includegraphics[width=60px]{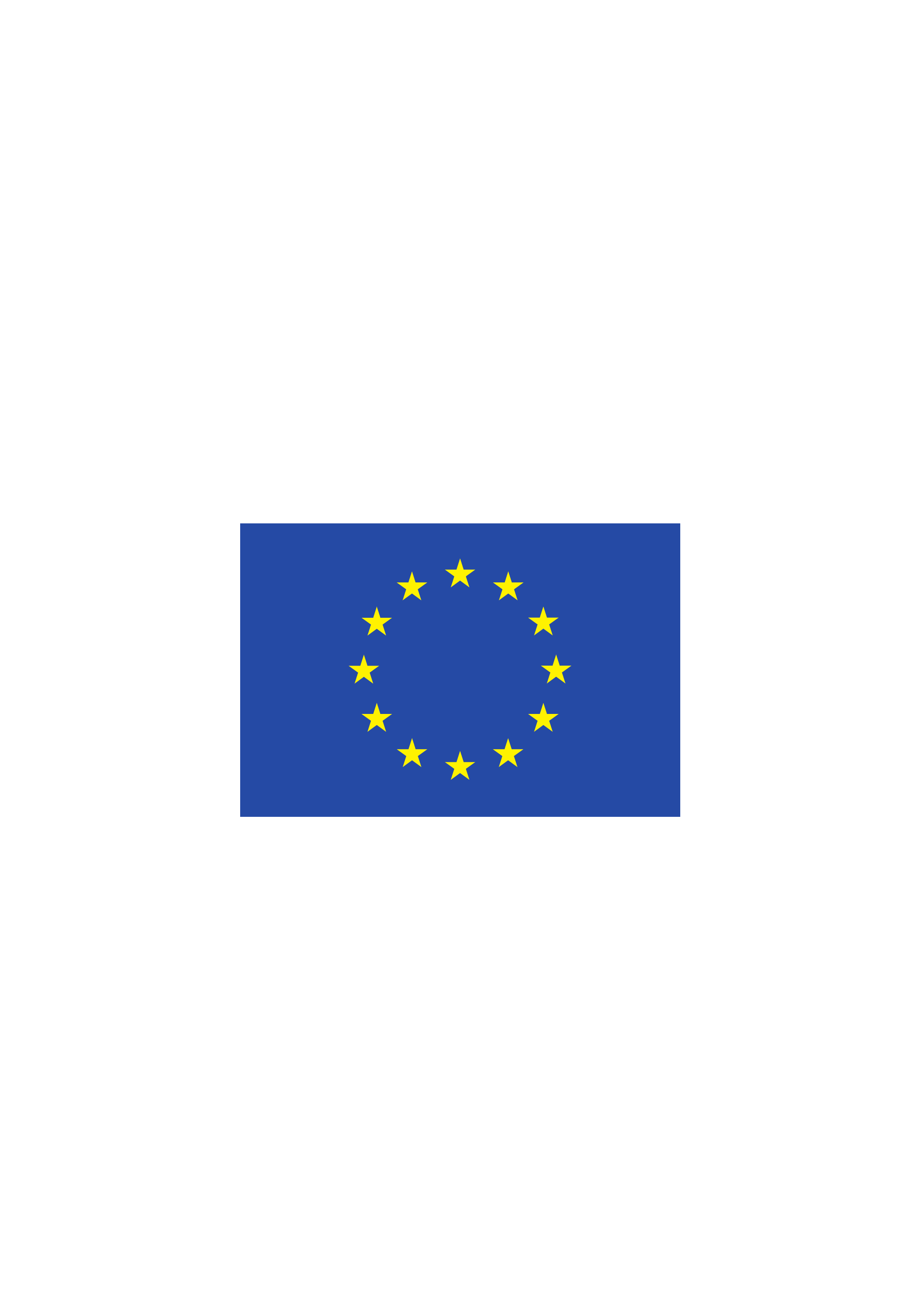}}}
\end{picture}

\clearpage
\setcounter{page}{1}

\section{Introduction}

It is widely believed that no NP-hard problem admits a polynomial time algorithm.
However, actual instances of problems that we are interested in solving often
admit much more structure than a general instance. This observation gave rise to the field
of \emph{parameterized complexity}, where the hardness of an instance does not
depend exclusively on the input size. In the parameterized regime, we assume that
each instance is equipped with an additional parameter $k$ and the goal is to
give a {\em{fixed-parameter algorithm}}: an algorithm with running time $f(k)\cdot n^{\Oh(1)}$, where $f$ is a function independent of~$n$. 
After settling that a problem admits such an algorithm, it is natural to look
for one with function $f$ as low as possible. We refer to~\cite{book1,book2,book3} for an introduction to
parameterized complexity.

One of the most widely used parameters is the \emph{treewidth} $\tw$ of the input
graph.  Usually, problems that involve only constraints of local nature admit an
algorithm with running time of the form $2^{\Oh(\tw)}\cdot n^{\Oh(1)}$~\cite{book1}. 
For a long time, such algorithms remained out of reach
for problems involving connectivity constraints, and for those only $2^{\Oh(\tw\log \tw)}\cdot n^{\Oh(1)}$-time algorithms were
known. The breakthrough came with the Cut\&Count technique, introduced by
Cygan et al. in~\citet{focs2011}, that allows one to design randomized Monte-Carlo algorithms with running times of the form $2^{\Oh(\tw)}\cdot n^{\Oh(1)}$
for a wide range of connectivity problems, e.g.,
\textsc{Hamiltonian Path}, \textsc{Connected Vertex Cover}, \textsc{Connected
Dominating Set}, etc. The technique was subsequently derandomized~\cite{rank-based,matroid-derandomization}.

One of the main issues with standard dynamic programming algorithms is that they tend to have prohibitively large space usage. The natural goal is therefore to
reduce the space complexity while not sacrificing much on the time complexity. Unfortunately, Drucker et al.~\cite{DruckerNS16} and Pilipczuk and Wrochna~\cite{pw-stacs2016}
gave some complexity-theoretical evidence that for dynamic programming on graphs of bounded treewidth, such a reduction is probably impossible. For example, they showed that under plausible assumptions,
there is no algorithm that works in time $2^{\Oh(\tw)}\cdot n^{\Oh(1)}$ and uses $2^{o(\tw)}\cdot n^{\Oh(1)}$ space for the \textsc{$3$-Coloring} or \textsc{Independent Set} problem. 

\paragraph{Treedepth.} 
The aforementioned issues motivate the research on a different, more restrictive parameterization, for which the reduction of space complexity would be possible. In this paper we will
consider the parameterization by \emph{treedepth}, defined as follows.

\begin{definition}
    An {\em{elimination forest}} of a graph $G$ is a rooted forest $F$ on the same vertex set as $G$ such that for every edge $uv$ of $G$, either $u$ is an ancestor of $v$ in $F$ or $v$ is an ancestor of $u$ in $F$.
    The {\em{treedepth}} of $G$ is the minimum possible depth of an elimination forest of $G$.
\end{definition}

The treedepth of a graph is never smaller than its treewidth, but it is also never larger than the treewidth times $\log n$.
In many concrete cases, the two parameters have the same advantages. 
For example, planar graphs have treewidth $\Oh(\sqrt{n})$, but also \emph{treedepth} $\Oh(\sqrt{n})$. 

It has been recently realized that on graphs of treedepth $d$, many algorithmic problems indeed can be solved in time $2^{\Oh(d)}\cdot n^{\Oh(1)}$ and using only polynomial space.\footnote{Throughout the introduction, when we speak about a graph of treedepth $d$, we mean a graph supplied with an elimination forest of depth $d$. 
While in the case of treewidth, a tree decomposition of approximately (up to a constant factor) optimum width can be computed in time $8^t\cdot n^{\Oh(1)}$~\cite{GM13,book1}, the existence of such an approximation algorithm 
for treedepth is a notorious open problem.}
For the most basic problems, such as \textsc{3-Coloring} and \textsc{Independent Set}, a simple branching algorithms achieves such complexity.
However, in contrast to the treewidth parameterization, for many more complex problems it is highly non-trivial, yet possible to establish similar bounds.
One technique that turns out to be useful here is
the framework of algebraic transforms introduced by Loksthanov and Nederlof~\citet{algebraization}, who
demonstrated how to reduce the space requirements of many dynamic programming
algorithms to polynomial in the input size by reorganizing the computation using a suitable transform. F\"urer and Yu~\citet{furer-1} applied this framework to give
$2^{\Oh(\td)}\cdot n^{\Oh(1)}$-time and polynomial space algorithms on graphs of treedepth $\td$ for the \textsc{Dominating Set} problem and for the problem of counting the number of perfect matchings.
Pilipczuk and Wrochna~\citet{pw-stacs2016} considered algorithms with even more restricted space
requirements: they showed that \textsc{$3$-Coloring}, \textsc{Dominating Set}, and
\textsc{Vertex Cover} admit algorithms that work in $2^{\Oh(\td)}\cdot n^{\Oh(1)}$ time and
use $\Oh(\td + \log{n})$ space. 
For \textsc{Dominating Set} they avoided the explicit use of algebraization and instead provided a more combinatorial interpretation based on what one could call \emph{inclusion-exclusion branching}.
Later, Pilipczuk and Siebertz~\citet{soda19} used
color-coding to give an $2^{\Oh(\td \log{\td})}\cdot n^{\Oh(1)}$-time and polynomial
space algorithm for the {\sc{Subgraph Isomorphism}} problem. Recently, Belbasi and F\"urer~\citet{furer-2}
presented an algorithm for counting Hamiltonian cycles in time $(4t)^d\cdot n^{\Oh(1)}$ and using
polynomial space, where $t$ is the width of a given tree decomposition and $d$ is its (suitably defined) depth.

\paragraph{Treedepth and Cut\&Count.}
Very recently, Hegerfeld and Kratsch~\citet{stacs2020} demonstrated that the Cut\&Count technique can be also applied in the setting of the treedepth parameterization.
Consequently, they gave randomized algorithms with running times
$2^{\Oh(\td)}\cdot n^{\Oh(1)}$ and polynomial space usage for a number of problems with connectivity constraints such as {\sc{Connected Vertex Cover}},
{\sc{Connected Dominating Set}}, {\sc{Feedback Vertex Set}}, or {\sc{Steiner Tree}}. 
However, Hegerfeld and Kratsch found it problematic to apply the methodology to several important problems originally considered by Cygan et al.~\cite{focs2011} in the context of Cut\&Count.
Specifically, these are problems based on selection of edges rather than vertices, such as {\sc{Hamiltonian Cycle}} or {\sc{Long Cycle}}.
For this reason, Hegerfeld and Kratsch explicitly asked in~\cite{stacs2020} whether {\sc{Hamiltonian Cycle}}, {\sc{Hamiltonian Path}}, {\sc{Long Cycle}}, and {\sc{Min Cycle Cover}} also admit $2^{\Oh(\td)}\cdot n^{\Oh(1)}$-time
and polynomial space algorithms on graphs of treedepth $\td$ (see Appendix~\ref{sec:problems} for problem definitions\footnote{Note that when discussing the {\sc{Long Path}} and the {\sc{Long Cycle}} problems, we use the
letter $\ell$ to denote the required length of a path, respectively of a cycle, instead of the letter $k$ that is perhaps more traditionally used in this context.}).




\paragraph{Our contribution.}
In this paper we introduce additional techniques that allow us to extend the
results of~\citet{stacs2020} and to answer the abovementioned open problem of Hegerfeld and Kratsch in the affirmative.
More precisely, we prove the following theorem.

\begin{theorem}
    \label{main-cor}
    There is a randomized algorithm that given a graph $G$ together with its
    elimination forest of depth $\td$, and number $k \in \N$, solves
    \textsc{Hamiltonian Cycle}, \textsc{Hamiltonian Path}, \textsc{$k$-Cycle}, \textsc{$k$-Path} and \textsc{Min Cycle Cover} in time $5^\td\cdot n^{\Oh(1)}$
    and using polynomial space. The algorithm has a one-sided error: it may give false
    negatives with probability at most $\frac{1}{2}$.
\end{theorem}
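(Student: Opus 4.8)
\emph{Proof sketch.} The plan is to route all five problems through a single counting problem over the elimination forest, solved by a polynomial-space recursion, and to combine the Cut\&Count technique with the Isolation Lemma for the connectivity bookkeeping. First I would reduce each target problem to detecting a ``cycle-cover-like'' subgraph with prescribed local constraints: \textsc{Hamiltonian Cycle} asks for a $2$-regular connected spanning subgraph; \textsc{Hamiltonian Path} is the same with two designated degree-$1$ vertices (or reduces to \textsc{Hamiltonian Cycle} on $G$ plus an apex); \textsc{Min Cycle Cover} is handled by iterating over the number of cycles; and \textsc{$k$-Cycle} and \textsc{$k$-Path} are reduced via the marker technique of Cygan et al.~\cite{focs2011}, with the Cut\&Count bookkeeping restricted to the vertices used by the solution. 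In each case, after sampling an edge-weight function $\mathsf{w}$ with values in a polynomial range, it suffices to compute, for every target weight $W$, the parity of the number of pairs $(X,(V_1,V_2))$ where $X$ is a candidate subgraph of weight $W$ and $(V_1,V_2)$ is a \emph{consistent cut} of $X$ (no edge of $X$ crosses the cut) with a fixed vertex placed in $V_1$. By the standard Cut\&Count argument a candidate with $c$ connected components contributes $2^{c-1}$ such pairs, so over $\Z_2$ only connected candidates survive; and by the Isolation Lemma, with probability at least $\tfrac12$ some target weight admits a unique minimum solution, making the parity equal to $1$ exactly when a solution exists. This yields the claimed one-sided error and reduces Theorem~\ref{main-cor} to evaluating these parities in time $5^{\td}\cdot n^{\Oh(1)}$ and polynomial space.

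For the counting itself I would exploit that every edge of $G$ joins a vertex to one of its ancestors in the elimination forest $F$, so the quantity above factorizes along $F$: the contribution of $\subtree(v)$ depends only on the ``interface'' between $v$ and its ancestors, namely their cut sides and how the solution-edges between them and $\subtree(v)$ are distributed. I would compute it by a recursion that walks down $F$; at a vertex $v$, having already fixed a small state for each proper ancestor, we sum over a constant-size set of states of $v$ --- encoding the side of the cut containing $v$ together with the number of solution-edges incident to $v$ that go upward --- and multiply the values returned recursively for the children of $v$. One checks that the number of relevant states is at most $5$ and that the recursion depth is $\td$, giving running time $5^{\td}\cdot n^{\Oh(1)}$; since the recursion stack only ever holds $\Oh(\td)$ states together with polynomially many counters (for the weight, the number of cycles, etc.), the space is polynomial.

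The crux --- exactly the point where the naive transfer of Cut\&Count to treedepth breaks down, as noted by Hegerfeld and Kratsch --- is handling the solution-edges between $v$ and its \emph{many} ancestors within this budget. Recording, per vertex, which ancestors receive an incident solution-edge makes the state $2^{\Omega(\td)}$ per vertex (space blows up), while branching over which ancestors they go to makes the branching factor $\mathrm{poly}(\td)$ per vertex (time becomes $\td^{\Omega(\td)}$, not single exponential). I would resolve this by decoupling the decisions: attach to each ancestor a formal variable tracking its remaining degree demand, so that this demand is fulfilled by the descendants independently, and recover the coefficients actually needed by evaluating the resulting low-degree polynomials at a small fixed number of points and interpolating --- a polynomial-space maneuver in the spirit of the algebraic transforms of Lokshtanov--Nederlof~\cite{algebraization} and F\"urer--Yu~\cite{furer-1}. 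Pushing the (constantly many) evaluation points through the recursion is what keeps the base at $5$ rather than $\mathrm{poly}(\td)$ or $2^{\Omega(\td)}$. I expect the main obstacle to be proving that this device is compatible with the Cut\&Count parity: the cut-consistency constraint ties together the two endpoints of an edge, which sit at different depths of $F$, so one must verify that the cut bookkeeping can be maintained simultaneously with the per-ancestor degree bookkeeping without double counting, and that the final parity is unchanged. Setting up this joint recursion carefully, and proving its correctness via the contribution-counting identities underlying Cut\&Count, is where the real work lies.
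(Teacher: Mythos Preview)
Your high-level plan --- Cut\&Count plus the Isolation Lemma, followed by a recursion down the elimination forest with a constant-size state per ancestor --- matches the paper's strategy. The paper packages the reductions more uniformly: it funnels all five problems through a single auxiliary problem, \textsc{Partial Cycle Cover} (at most $k$ vertex-disjoint cycles covering exactly $\ell$ vertices), counts modulo $2^{n-\ell+k+1}$ rather than modulo $2$ so that ``at most $k$ components'' falls out directly, and reduces \textsc{Long Path} to \textsc{Long Cycle} by adding one edge, thereby avoiding the marker technique altogether. These differences are cosmetic.

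The substantive gap is in the Count part. You assert that five states per ancestor suffice (``cut side together with the number of solution-edges going upward'') and propose to handle the ancestors' degree demands by attaching a formal variable per ancestor and interpolating. But during the recursion the number of selected edges incident to an ancestor $a$ is \emph{not} bounded by~$2$: before the constraint on $a$ has been enforced, arbitrarily many edges from the current subtree to $a$ may have been chosen, so the polynomial in $x_a$ has degree up to $n$, and interpolation needs polynomially many evaluation points per ancestor --- yielding $n^{\Oh(d)}$ rather than $5^d$. Capping the degree at~$2$ up front reintroduces exactly the child-coordination you are trying to avoid. So the mechanism you sketch does not deliver base~$5$, and your own state description (side $\in\{L,R\}$ times upward degree $\in\{0,1,2\}$, plus ``unused'') already exceeds five options.

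The paper resolves this with a device you do not mention: it replaces each vertex $u$ by two adjacent copies $u^0,u^1$ and counts \emph{simple perfect matchings} in the resulting graph $G'$ in place of cycle covers in $G$ (Lemma~4.1 shows these counts differ by a factor $2^{\ell}$). In $G'$ the constraint ``$u$ has degree $0$ or $2$'' becomes ``both $u^0$ and $u^1$ are covered'', i.e.\ a \emph{covering} constraint rather than an exact-degree constraint. Covering constraints admit inclusion--exclusion per copy: allow both, forbid $u^1$, forbid $u^0$, forbid both; by symmetry the two middle terms coincide, and combined with the cut side this yields exactly the five labels $\{\zero,\onel,\oner,\twol,\twor\}$ for each ancestor (Lemma~5.3). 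Crucially, once such a label is fixed there is no residual numerical constraint on that ancestor, so the contributions of different children genuinely multiply (Lemma~5.4). This doubled-graph-to-matching step is the paper's main new idea and is precisely what your sketch is missing.
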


In fact, Theorem~\ref{main-cor} is an easy corollary of the following result for a generalization of the considered problems.
In the \PCC problem we are given an undirected graph $G$ and integers $k$ and $\ell$, and we ask whether in $G$ there is a family of at most $k$ vertex-disjoint cycles that jointly visit exactly $\ell$ vertices.
We will prove the following theorem.

\begin{theorem}
    \label{main-theorem}
     There is a randomized algorithm that given a graph $G$ together with its
     elimination forest of depth $\td$, and numbers $k,\ell\in \N$, solves the
     \PCC problem for $G,k,\ell$ in time $5^\td\cdot n^{\Oh(1)}$
    and using polynomial space. The algorithm has a one-sided error: it may give false
    negatives with probability at most $\frac{1}{2}$.
\end{theorem}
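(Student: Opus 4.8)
The plan is to combine the Cut\&Count method with a recursive, polynomial-space evaluation scheme over the elimination forest. First I would set up the Cut\&Count reduction. After the harmless preprocessing $k \leftarrow \min(k,\lfloor \ell/3\rfloor)$ (so that $k$ never exceeds the number of edges of any cycle cover on $\ell$ vertices), note that $(G,k,\ell)$ is a yes-instance of \PCC{} if and only if there is a pair $(X,M)$ where $X \subseteq E(G)$ is such that every endpoint of an edge of $X$ is incident to exactly two edges of $X$ (so $X$ is a vertex-disjoint union of cycles), the set $V(X)$ of these endpoints has size $\ell$, and $M \subseteq X$ has $|M| = k$ with every cycle of $X$ containing an edge of $M$ --- the last condition forces the number of cycles to be at most $k$, and conversely any cycle cover with $c \le k$ cycles can be certified by marking one edge per cycle and $k-c$ further edges. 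The \emph{count} step replaces ``every cycle meets $M$'' by counting consistent cuts: for a fixed admissible $(X,M)$, the number of ordered partitions $(V_1,V_2)$ of $V(X)$ such that no edge of $X$ goes between the parts and both endpoints of every edge of $M$ lie in $V_1$ equals $2^{c_0}$, where $c_0$ is the number of cycles of $X$ disjoint from $M$; hence, modulo $2$, the number of triples $(X,M,(V_1,V_2))$ with these properties equals the number of valid certificate pairs $(X,M)$. To turn a parity into a decision I would apply the Isolation Lemma: draw independent random weights $w_1,w_2\colon E(G)\to\{1,\dots,N\}$ with $N := 4|E(G)|$, assign $(X,M)$ the weight $w_1(X)+w_2(M)$ over the $2|E(G)|$-element universe $E(G)\sqcup E(G)$, and observe that if a solution exists then with probability at least $\tfrac12$ some target weight $W$ is attained by a unique certificate and the corresponding count of triples is odd, whereas if no solution exists all these counts are even for every $W$.

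It therefore suffices to compute, for every relevant weight $W$, the parity of the number of triples $(X,M,(V_1,V_2))$ of weight $W$ with $|V(X)| = \ell$ and $|M| = k$. I would package all of these parities into the generating polynomial $P(z,x,y) = \sum_{(X,M,(V_1,V_2))} z^{w_1(X)+w_2(M)}\, x^{|V(X)|}\, y^{|M|}$ over $\mathbb{F}_2$; the coefficients $[z^W x^\ell y^k]P$ are exactly what we need, and since $P$ has degree $n^{\Oh(1)}$ in each variable it is enough to evaluate $P$ at $n^{\Oh(1)}$ points of an extension field $\mathbb{F}_{2^m}$ with $m = \Oh(\log n)$ and interpolate. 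The core is a recursive procedure that, given one such evaluation point, returns the corresponding value of $P$ in time $5^{\td}\cdot n^{\Oh(1)}$ and polynomial space by walking the elimination forest $F$ and \emph{recomputing} partial sums rather than tabulating them. For a node $v$ with root-to-$v$ path $\mathsf{tail}(v)$ (of length at most $\td$) and a ``boundary state'' $\rho$ assigning to each vertex of $\mathsf{tail}(v)$ one of five values --- \textbf{(i)} not in $V(X)$; \textbf{(ii)} in $V(X)$ with no selected incident edge yet and no committed side; \textbf{(iii)} in $V(X)$ with exactly one selected incident edge so far and cut side $V_1$; \textbf{(iv)} the same with cut side $V_2$; \textbf{(v)} in $V(X)$ already incident to two selected edges --- the procedure computes $g_v(\rho)\in\mathbb{F}_{2^m}$, the evaluated generating function of the partial solutions living on the part of $G$ governed by the subtree $T_v$ that are consistent with $\rho$. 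Only five values are needed because a vertex of type \textbf{(ii)} has its side chosen lazily, at the moment its first selected edge appears, and a vertex of type \textbf{(v)} is saturated so its side can no longer influence anything and is forgotten; markers are attached to edges, not vertices, and the choice to mark an edge is made when that edge is processed at its lower endpoint, so marking costs no extra boundary state.

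To get the bound $5^{\td}\cdot n^{\Oh(1)}$ the recursion must have the shape $g_v(\rho) = (\text{local factor at } v)\cdot \prod_{c\ \text{child of}\ v}\bigl(\sum_{s} g_c(\rho\ \text{extended by}\ c\mapsto s)\bigr)$, where the sum ranges only over the five states $s$ of $c$ itself and the \emph{same} boundary data is handed to each child; then $T(v) = n^{\Oh(1)} + 5\sum_c T(c)$, which unrolls to $T(\mathrm{root}) = \sum_v 5^{\mathrm{depth}(v)}\cdot n^{\Oh(1)} = 5^{\td}\cdot n^{\Oh(1)}$, and since the recursion stack has depth at most $\td$ with frames of size $n^{\Oh(1)}$, the space is polynomial. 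The local factor at $v$ absorbs the choices about $v$ alone: whether $v\in V(X)$, and which of the at most $\td$ edges from $v$ to its ancestors are selected and which of those are marked; here the degree cap of $2$ is crucial, since $v$ may select at most two up-edges, so there are only $n^{\Oh(1)}$ such choices rather than $2^{\td}$. The genuinely delicate point, which I expect to be the main obstacle, is reconciling the product-over-children form with the requirement that every vertex of $V(X)$ end up with degree \emph{exactly} $2$: an ancestor $u$ may receive selected edges from several children's subtrees, so its final degree is a sum over children, and enforcing ``sum $=2$'' naively would force a costly split of a degree budget among the children (destroying the product structure) or the tabulation of a $5^{\td}$-sized table (destroying the space bound). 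The way around this is to linearize the degree (and, implicitly, the cut-component) constraints at the ancestors by an inclusion--exclusion / algebraic transform --- in the spirit of the inclusion--exclusion branching of Pilipczuk and Wrochna, the algebraization framework of Lokshtanov and Nederlof, and the treedepth Cut\&Count of Hegerfeld and Kratsch --- after which each child can be evaluated against the shared boundary state independently and the contributions multiplied. Once this is in place, running the recursion at $n^{\Oh(1)}$ points, interpolating, reading off $[z^W x^\ell y^k]P \bmod 2$ for all $W$, and feeding the result to the Isolation Lemma analysis proves Theorem~\ref{main-theorem}; Theorem~\ref{main-cor} then follows by the standard reductions of \textsc{Hamiltonian Cycle}, \textsc{Hamiltonian Path}, \textsc{$k$-Cycle}, \textsc{$k$-Path}, and \textsc{Min Cycle Cover} to \PCC{}.
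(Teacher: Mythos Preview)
Your Cut\&Count setup is fine (the marker variant with $M\subseteq X$ and $|M|=k$ is a perfectly valid alternative to the paper's modular count modulo $2^{n-\ell+k+1}$), and the overall architecture---recursion down the elimination tree, five states per ancestor, product over children, polynomial space via recomputation---is exactly right. The gap is in the five states themselves.

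Your states record the \emph{exact} current degree of each ancestor in $X$ (0, 1, or 2) together with a side that is ``chosen lazily'' at the first incident edge and ``forgotten'' once degree~2 is reached. But these states are precisely the kind that \emph{cannot} be handed identically to all children: if $u$ carries state~(iii) (``degree~1, side $V_1$''), each child separately believes it may contribute one more edge to $u$, so the product over children happily produces objects in which $u$ has degree $1+(\text{number of contributing children})>2$; and if $u$ carries state~(ii) (``degree~0, side undecided''), different children may lazily commit $u$ to opposite sides. You correctly flag this obstacle, but the remedy you gesture at (``linearize by an inclusion--exclusion/algebraic transform'') is exactly the part the paper has to invent, and it does not fall out of the Hegerfeld--Kratsch or Pilipczuk--Wrochna machinery for vertex-selection problems.

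What the paper does is replace the degree-$2$ constraint by two independent degree-$1$ constraints: it passes to the auxiliary graph $G'$ obtained by duplicating every vertex $u$ into $u^0,u^1$ (with an extra edge $u^0u^1$ and four copies of every original edge), and counts \emph{simple perfect matchings} of $G'$ instead of cycle covers of $G$; a short bijection shows each cycle cover on $\ell$ edges corresponds to $2^{\ell}$ such matchings. Now the condition at $u$ is ``$u^0$ is covered'' \emph{and} ``$u^1$ is covered'', which is amenable to ordinary inclusion--exclusion over the two copies. After IE and using the $u^0\!\leftrightarrow\! u^1$ symmetry, only three relaxed constraints survive at $u$---``no restriction'', ``forbid edges at $u^1$'', ``forbid edges at both $u^0,u^1$''---and combining with the cut side gives the five states $\{2_{\mathsf L},2_{\mathsf R},1_{\mathsf L},1_{\mathsf R},0\}$. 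Crucially, these are \emph{monotone} (forbid-type) constraints, so they are genuinely multiplicative across children, which is what makes $P[u,f]=\prod_{v\in\child(u)}P(v,f)$ hold and yields the $5^{\td}\cdot n^{\Oh(1)}$ bound. The exclusive step then reads
\[
P(u,f)=P[2_{\mathsf L}]+P[2_{\mathsf R}]-2P[1_{\mathsf L}]-2P[1_{\mathsf R}]+P[0].
\]
This vertex-doubling/matching reduction is the paper's main technical idea; without it (or an equivalent trick turning ``degree exactly~2'' into a conjunction of coverage constraints), your recursion as written does not have the product form you need.
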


To see that Theorem~\ref{main-theorem} implies Theorem~\ref{main-cor}, note that
\textsc{Hamiltonian Cycle}, {\sc{Min Cycle Cover}} and {\sc{Long Cycle}} are
special cases of the \PCC (for fixed parameters $k$ and $\ell$).

To solve \textsc{Long Path}, we can simply iterate through all pairs of non-adjacent vertices $s,t$ and apply the {\sc{Long Cycle}} algorithm to the graph $G$ with edge $st$ added; this increases the treedepth
by at most $1$ and the provided elimination forest can be easily adjusted. It is easy to see that then the original graph $G$ contains a simple path on $\ell$ vertices if and only if for some choice of $s$ and $t$,
we find a cycle of length $\ell$ in $G$ augmented with the edge $st$.
Finally, \textsc{Hamiltonian Path} is just {\sc{Long Path}} applied for $\ell=|V(G)|$.

We remark that our algorithmic findings have concrete applications outside of the realm of structural parameterizations.
For instance, Lokshtanov et al.~\cite{k-path} gave a $2^{\Oh(\sqrt{\ell}\log^2 \ell)}\cdot n^{\Oh(1)}$-time polynomial space algorithm for the {\sc{Long Path}} problem on $H$-minor-free graphs, for every fixed $H$.
In Appendix~\ref{applications} we present how using our results one can improve the running time to $2^{\Oh(\sqrt{\ell}\log \ell)}\cdot n^{\Oh(1)}$ while keeping the polynomial space complexity.

\paragraph{Our techniques.}
Similarly to Hegerfeld and Kratsch~\citet{stacs2020} we use the Cut\&Count framework, 
but we apply a different new view on the Count part, suited for problems based on edge selection. 
The main idea is that instead of counting cycle covers, as a standard application of Cut\&Count would do,
we count perfect matchings in an auxiliary graph, constructed by replacing every vertex with two adjacent copies; see Figure~\ref{fig:graph_gprime}.
The number of such perfect matchings can be related to the number of cycle covers of the original graph.
However, the considered perfect matchings can be counted within the claimed complexity
by either employing the previous ``algebraized'' dynamic programming algorithm, or the algorithm based on inclusion-exclusion branching (our presentation chooses the latter).


Applying this approach na\"ively would give us a polynomial space algorithm with running time $8^{\td}\cdot n^{\Oh(1)}$. 
We improve the running time to $5^{\td}\cdot n^{\Oh(1)}$ by employing several observations about the symmetries of recursive calls of our algorithms, 
in a similar way as in the algorithm for $\#k$-\textsc{Multi-Set-Cover} of Nederlof~\cite{mscnederlof}.

\paragraph{Organization of the paper.} The remainder of the paper is devoted to the proof of Theorem~\ref{main-theorem}.
In Section~\ref{sec:prel} we introduce the notation and present basic definitions.
In Section~\ref{sec:cut} we discuss the Cut\&Count technique in a self-contained manner and explain the Cut part.
In Section~\ref{sec:cyctomatch} we reduce the Count part to counting perfect matchings in an auxiliary graph.
In Section~\ref{sec:count} we give an algorithm for counting such matchings, and in Section~\ref{sec:veri} we subsequently verify the correctness of the algorithm.
We conclude with several open questions in Section~\ref{sec:conc}.

\section{Preliminaries}\label{sec:prel}

\paragraph*{Notation.} 
For a graph $G$, by $\cc(G)$ we denote number of connected components of $G$.
Let $F$ be a subset of edges of $G$.
By $\cc(F)$ we denote the number of connected components of the graph consisting of all the edges of $F$ and vertices incident to them.
For a vertex $u$, by $\deg_F(u)$ we mean the number of edges of $F$ incident to $u$.
Then $F$ is a {\em{matching}} if $\deg_F(u)\in \{0,1\}$ for every vertex $u$, is a {\em{perfect matching}} if $\deg_F(u)=1$ for every vertex $u$, 
and is a {\em{partial cycle cover}} if $\deg_F(u)\in \{0,2\}$ for every vertex $u$.
Note that thus we treat partial cycle covers as sets of edges.


A {\em{cut}} of a set $U$ is just an ordered partition of $U$ into two sets, that is, 
a pair $(L,R)$ such that $L\cap R=\emptyset$ and $L\cup R=U$.
A cut $(L,R)$ of the vertex set of a graph is {\em{consistent}} with a subset of edges $F$ if there is no edge in $F$ with one endpoint in $L$ and second in $R$.

For a function $f$ and elements $x,y$, where $x$ is not in the domain of $f$, by $f[x \mapsto y]$ we denote the function obtained from $f$ by extending its domain by $x$ and setting $f(x)=y$.

We use the $\Os(\cdot)$ notation to hide factors polynomial in the input size.
For convenience, throughout the paper we assume the RAM model: every integer takes a unit of space and arithmetic operations on integers have unit cost.
However, it can be easily seen that all the numbers appearing during the computation have bit length bounded polynomially in the input size.
Since we never specify the polynomial factors in the time or space complexity of our algorithms, without any influence on the claimed asymptotic bounds we may
assume that the representation of any number takes polynomial space and arithmetic operations on the numbers take polynomial time.

\paragraph*{Treedepth.} 
A {\em{rooted forest}} is a directed acyclic graph $T$ where every vertex has outdegree at most $1$.
The vertices of outdegree $0$ in $T$ are called the {\em{roots}}.
Whenever a vertex $u$ is reachable from a vertex $v$ by a directed path in $T$, we say that $u$ is an {\em{ancestor}} of $v$, and $v$ is a {\em{descendant}} of $u$.
Note that every vertex is its own ancestor as well as descendant.
The {\em{depth}} of a rooted forest is the maximum number of vertices that can appear on a directed path in it.

We use the following notation from previous works~\cite{stacs2020,pw-stacs2016}.
For a vertex $u$ of a rooted forest $T$, we denote:
\begin{align*}
    \subtree[u] & \coloneqq \{v\colon u \textrm{ is ancestor of } v\}, & \qquad \subtree(u) & \coloneqq \subtree[u]\setminus \{u\}, \\
    \tail[u] & \coloneqq \{v\colon v\textrm{ is ancestor of } u\}, & \qquad \tail(u) & \coloneqq \tail[u]\setminus \{u\}, \\
 \broom[u] & \coloneqq \tail[u]\cup \subtree[u]. & &
\end{align*}
Additionally, $\child(u)$ denotes the set of children of $u$, whereas $\parent(u)$ is the {\em{parent}} of $u$, that is, the only outneighbor of $u$.
If $u$ is a root, we set $\parent(u)=\bot$.

For a graph $G$, an {\em{elimination forest}} of $G$ is a rooted forest $T$ on the same vertex set as $G$ that satisfies the following property:
whenever $uv$ is an edge in $G$, then in $T$ either $u$ is an ancestor of $v$, or $v$ is an ancestor of $u$.
The {\em{treedepth}} of a graph is the minimum possible depth of an elimination forest of $G$.

\paragraph*{Isolation Lemma.}
The only source of randomness in our algorithm is the Isolation Lemma of Mulmuley et al.~\cite{isolation-lemma}.
Suppose $U$ is a finite set and $\omega\colon U\to \Z$ is a weight function on $U$.
We say that $\omega$ {\em{isolates}} a non-empty family of subsets $\Ff\subseteq 2^U$ if there is a unique $S\in \Ff$ such that
$$\omega(S) = \min_{X\in \Ff}\,\omega(X),$$
where $\omega(X)\coloneqq \sum_{x\in X}\omega(x)$. Then the Isolation Lemma can be stated as follows.


\begin{lemma}[Isolation Lemma~\cite{isolation-lemma}]\label{isolation-lemma}
    Let $U$ be a finite set and $\Ff\subseteq 2^U$ be a non-empty family of subsets of $U$.
    Suppose for every $u\in U$ we choose its weight $\omega(u)$ uniformly and independently at random from the set $\{1,\ldots,N\}$, where $N\in \N$.
    Then $\omega$ isolates $\Ff$ with probability at least $1-\frac{|U|}{N}$.
\end{lemma}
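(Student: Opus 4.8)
The plan is to prove the lemma by the classical \emph{union bound over elements} argument: for each element $u\in U$ we isolate a single ``ambiguous'' event that can only spoil uniqueness, and we bound its probability by $1/N$. First I would fix, for every $u\in U$, two quantities that depend only on the weights $\omega(v)$ for $v\neq u$: let $\alpha_u$ be the minimum of $\omega(X)$ over all $X\in\Ff$ with $u\notin X$ (set $\alpha_u=+\infty$ if no such set exists), and let $\beta_u$ be the minimum of $\omega(X\sm\{u\})$ over all $X\in\Ff$ with $u\in X$ (set $\beta_u=+\infty$ if no such set exists). Both $\alpha_u$ and $\beta_u$ are determined by $\{\omega(v):v\neq u\}$ and do not involve $\omega(u)$. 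Call $u$ \emph{ambiguous} if $\alpha_u=\omega(u)+\beta_u$; note that $u$ is then automatically non-ambiguous whenever $\alpha_u=+\infty$ or $\beta_u=+\infty$.

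The key claim is: if no element of $U$ is ambiguous, then $\omega$ isolates $\Ff$. I would prove this by contradiction. Suppose $S_1,S_2\in\Ff$ are distinct and both attain the minimum weight $W\coloneqq\min_{X\in\Ff}\omega(X)$. Since $S_1\neq S_2$, some element $u$ belongs to exactly one of them; up to swapping names assume $u\in S_1$ and $u\notin S_2$. Because $S_2\in\Ff$ avoids $u$, we have $\alpha_u\le\omega(S_2)=W$; because any set realizing $\alpha_u$ lies in $\Ff$, we have $\alpha_u\ge W$; hence $\alpha_u=W$. Symmetrically, since $S_1\in\Ff$ contains $u$, we have $\omega(u)+\beta_u\le\omega(u)+\omega(S_1\sm\{u\})=\omega(S_1)=W$, while $\omega(u)+\beta_u\ge W$ since the set realizing $\beta_u$ lies in $\Ff$; hence $\omega(u)+\beta_u=W$. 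Combining these, $\alpha_u=\omega(u)+\beta_u$, so $u$ is ambiguous, contradicting the assumption. This proves the claim.

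It remains to bound the probability that some element is ambiguous. Fix $u\in U$ and condition on the weights of all elements other than $u$; then $\alpha_u-\beta_u$ is a fixed number, whereas $\omega(u)$ is still uniform over $\{1,\dots,N\}$ and independent of that conditioning, so $\prob{\omega(u)=\alpha_u-\beta_u}{}\le\frac{1}{N}$, and since this holds for every value of the conditioning it holds unconditionally: $u$ is ambiguous with probability at most $\frac1N$. By the union bound, the probability that at least one element of $U$ is ambiguous is at most $\frac{|U|}{N}$, and by the key claim, on the complementary event $\omega$ isolates $\Ff$. Therefore $\omega$ isolates $\Ff$ with probability at least $1-\frac{|U|}{N}$, which is exactly the statement.

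Since this is essentially a textbook result, I do not expect a genuine obstacle; the only points requiring care are the bookkeeping of the degenerate cases (when $\Ff$ has no set avoiding $u$, or no set containing $u$, one must check that such a $u$ is never ambiguous, which is why the $+\infty$ conventions above are convenient) and making the independence argument precise, namely that $\alpha_u$ and $\beta_u$ are measurable with respect to $\{\omega(v):v\neq u\}$ so that conditioning on those weights leaves $\omega(u)$ uniform.
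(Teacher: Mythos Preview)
The paper does not actually prove this lemma: it is stated with a citation to Mulmuley, Vazirani, and Vazirani and then simply used. Your argument is the standard ``ambiguous element'' proof of the Isolation Lemma and it is correct, including the handling of the degenerate cases via the $+\infty$ convention and the independence reasoning. There is nothing to compare against in the paper itself.
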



\section{The Cut part}\label{sec:cut}

We now proceed to the proof of Theorem~\ref{main-theorem}. Throughout the proof we fix the input graph $G=(V,E)$, its elimination forest $T$ of depth $d$, and numbers $k,\ell\in \N$.
We may assume that $G$ is connected, as otherwise we may apply the algorithm to each connected component separately. Thus $T$ has to be a tree, so we will call it an {\em{elimination tree}} to avoid confusion.
Also, we denote $n\coloneqq |V|$.

As mentioned before, we shall apply the Cut\&Count technique of Cygan et al.~\cite{focs2011}. 
This technique consists of two parts: the Cut part and the Count part.
The idea is that in the first part, we relax the connectivity requirements and show that it is enough to count the number of relaxed solutions together with cuts consistent with them, 
as this number is congruent to the number of non-relaxed solutions modulo a power of $2$. 
The Isolation Lemma is used here to ensure that with high probability, the number of solutions does not accidentally cancel out modulo this power of $2$.
More precisely, having drawn a weight function at random, for each possible total weight $w$ we count the number of solutions of total weight $w$.
Then the Isolation Lemma asserts that, with high probability, for some $w$ there will be a unique solution of total weight $w$.
Then comes the Count part, where the goal is to efficiently count the number of relaxed solutions together with cuts consistent with them.

We refer the reader to~\cite{focs2011} for a more elaborate discussion of the Cut\&Count technique, while now we apply it to the particular case of \PCC.
A relaxed solution is just a partial cycle cover consisting of $\ell$ edges. Then a solution is a relaxed solution that spans at most $k$ cycles.
Formally, the sets of \emph{solutions} ($\Ss$) and \emph{relaxed solutions} ($\Rr$) are defined as follows:
\begin{align*}
\Rr\coloneqq &\ \{\,F\subseteq E\ \colon\ |F|=\ell\textrm{ and }\deg_F(u)\in \{0,2\}\textrm{ for every }u\in V\,\};\\
\Ss\coloneqq &\ \{\,F\in \Rr\ \colon\ \cc(F)\leq k\,\}.
\end{align*}
Suppose now that the input graph $G$ is supplied with a weight function on edges $\omega\colon E\to \Z$. Then we can stratify the families above using the total weight.
That is, for every $w\in \Z$ we define:
$$\Rr_w\coloneqq \{\,F\in \Rr\ \colon\ \omega(F)=w\}\qquad\textrm{and}\qquad \Ss_w\coloneqq \{\,F\in \Ss\ \colon\ \omega(F)=w\,\}.$$
Now, let
$$\Cc_w\coloneqq \{\,(F,(L,R))\ \colon\ F\in \Rr_w\textrm{ and }(L,R)\textrm{ is a cut of }V\textrm{ consistent with }F\,\}.$$
The following observation is the key idea in the Cut\&Count technique.

\begin{lemma}\label{lem:cut}
 For every $w\in \Z$, we have
 $$|\Cc_w|\equiv \sum_{F\in \Ss_w} 2^{n-\ell+\cc(F)}\mod 2^{n-\ell+k+1}.$$
\end{lemma}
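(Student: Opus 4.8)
The plan is to analyze the left-hand side $|\Cc_w|$ by grouping the pairs $(F,(L,R)) \in \Cc_w$ according to the partial cycle cover $F$. That is, I would write
\[
|\Cc_w| = \sum_{F\in \Rr_w} \#\{\,(L,R)\ \colon\ (L,R)\textrm{ is a cut of }V\textrm{ consistent with }F\,\}.
\]
The key combinatorial fact is that a cut $(L,R)$ is consistent with $F$ precisely when no edge of $F$ crosses the cut, i.e.\ every connected component of the graph $(V(F),F)$ is entirely contained in $L$ or entirely in $R$, while the vertices not touched by $F$ can be placed on either side arbitrarily. Since $F$ has exactly $\ell$ edges and each component of $F$ is a cycle (every vertex has degree $0$ or $2$), the graph $(V(F),F)$ has exactly $\ell$ vertices spread over $\cc(F)$ components, leaving $n-\ell$ vertices untouched. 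Hence the number of consistent cuts of a fixed $F\in\Rr_w$ is exactly $2^{\cc(F)}\cdot 2^{n-\ell} = 2^{n-\ell+\cc(F)}$. Therefore
\[
|\Cc_w| = \sum_{F\in \Rr_w} 2^{n-\ell+\cc(F)}.
\]

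The second step is to reduce this sum over $\Rr_w$ to the sum over $\Ss_w$ modulo $2^{n-\ell+k+1}$. I would split $\Rr_w = \Ss_w \,\dot\cup\, (\Rr_w\setminus\Ss_w)$, where the latter set consists of relaxed solutions $F$ with $\cc(F)\geq k+1$. For every such $F$ we have $n-\ell+\cc(F) \geq n-\ell+k+1$, so the term $2^{n-\ell+\cc(F)}$ is divisible by $2^{n-\ell+k+1}$ and vanishes modulo $2^{n-\ell+k+1}$. Consequently,
\[
|\Cc_w| \equiv \sum_{F\in \Ss_w} 2^{n-\ell+\cc(F)} \mod 2^{n-\ell+k+1},
\]
which is exactly the claimed congruence.

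I expect the only real point requiring care — though it is not hard — is the exact count of consistent cuts: one must be precise that components of $F$ are the components of the subgraph on $V(F)$ (the vertices incident to $F$) and not of all of $G$, and that the $n-\ell$ isolated vertices contribute an independent factor of $2$ each. This matches the definitions of $\cc(F)$ and of a cut given in the preliminaries. No randomness or isolation is used here; the Isolation Lemma enters only later, when this identity is combined with a choice of $\omega$ to pin down a unique solution. The argument is purely a double-counting/parity observation, and it holds for every fixed $w\in\Z$ as stated.
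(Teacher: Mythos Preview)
Your proposal is correct and follows essentially the same argument as the paper: group pairs in $\Cc_w$ by the cycle cover $F$, count consistent cuts as $2^{n-\ell+\cc(F)}$, and observe that the terms with $\cc(F)\ge k+1$ vanish modulo $2^{n-\ell+k+1}$. The paper's proof is slightly terser but identical in structure and content.
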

\begin{proof}
 Observe that for each $F\in \Rr_w$ there are exactly $2^{n-\ell+\cc(F)}$ cuts of $V$ consistent with it, because each of the $\cc(F)$ cycles spanned by $F$ can be on either side of the cut, and similarly each of
 $n-\ell$ vertices not incident to the edges of $F$ can be on either side of the cut. Hence $|\Cc_w|=\sum_{F\in \Rr_w} 2^{n-\ell +\cc(F)}$.
 However, for every $F\in \Rr_w\setminus \Ss_w$ the term $2^{n-\ell+\cc(F)}$ is divisible by $2^{n-\ell+k+1}$ since $\cc(F) \geq k+1$, and thus $\sum_{F\in \Rr_w} 2^{\cc(F)}\equiv \sum_{F\in \Ss_w} 2^{\cc(F)}\bmod 2^{n-\ell+k+1}$.
\end{proof}

In the next sections we will present the Count part of the technique, which boils down to proving the following lemma.

\begin{lemma}\label{lem:count}
  Given $w\in \Z$ and a weight function $\omega\colon E\to \{1,\ldots,N\}$, where $N=\Os(1)$, the number $|\Cc_w|$ can be computed in time $\Os(5^d)$ and space $\Os(1)$.
\end{lemma}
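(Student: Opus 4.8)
The plan is to compute $|\Cc_w|$ by reducing it to a weighted count of perfect matchings in an auxiliary graph $G'$, and then to count those matchings by a recursive inclusion–exclusion branching algorithm that follows the elimination tree $T$. First I would build $G'$ as sketched in the introduction: replace each vertex $v\in V$ by two twin vertices $v_L,v_R$ joined by an edge (the ``twin edge''), and replace each original edge $uv\in E$ by the two edges $u_Lv_L$ and $u_Rv_R$ (so the copies do not cross sides). A perfect matching $M$ of $G'$ either uses the twin edge at $v$ — meaning $v$ is untouched by the ``cycle'' part — or matches $v_L$ and $v_R$ through original-edge copies; tracing the non-twin edges, the left copies and the right copies each pick out a partial cycle cover of $G$, and consistency of $M$ forces these two partial cycle covers to coincide as an edge set $F$, with the two sides of the matching inducing a consistent cut $(L,R)$ of $V$ (a vertex $v$ goes to $L$ or $R$ according to which side's copies lie on the cycle through it; isolated vertices are free, contributing the factor $2^{n-\ell}$, and each cycle of $F$ can independently be traversed on the $L$ or the $R$ side, contributing $2^{\cc(F)}$). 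Assigning each non-twin edge the weight of the original edge it copies, and the twin edges weight $0$, the number of perfect matchings of $G'$ with a prescribed total weight $w$ (where each side of an $F$-cycle contributes the full $\omega$-weight of that cycle, so a consistent pair $(F,(L,R))$ has matching-weight exactly $\omega(F)$) equals $|\Cc_w|$. This correspondence will be stated and proved carefully in Section~\ref{sec:cyctomatch}; here I only use it as a black box.

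Next, with $|\Cc_w| = \sum_{M}[\omega(M)=w]$ over perfect matchings $M$ of $G'$, I would compute, for every target weight $w$ simultaneously, the generating polynomial $\sum_M x^{\omega(M)}$, which has degree $\Os(1)$ since all weights are $\Os(1)$ and there are $\Os(1)$ edges. Note $G'$ also has an elimination tree of depth $\Oh(d)$, obtained from $T$ by replacing each node $v$ with the two nodes $v_L,v_R$ stacked on top of each other (any edge of $G'$ lies between copies of $v$ and copies of an ancestor/descendant of $v$ in $T$, hence between ancestor–descendant pairs in the new tree). The core is then a polynomial-space algorithm that counts weighted perfect matchings of a graph given an elimination tree of depth $d'$. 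I would do this by a recursion over $T$: process the root $r$, and for $r$ guess how $r$ is matched — either to one of its neighbours (all neighbours are ancestors or descendants, and since $r$ is the root they are descendants), or, after deleting $r$, recurse on the subtrees of $T-r$. The subtlety is that matching $r$ to a descendant $v$ ``consumes'' $v$ inside its subtree, so the recursive subproblems are not plain ``count perfect matchings'' but ``count perfect matchings that avoid a prescribed set of already-matched vertices'' — and the set of forbidden vertices on the path from $r$ down can be as large as the depth. To keep the recursion self-similar and polynomial-space, I would, as is standard for this kind of treedepth DP, encode the interaction with ancestors as a function assigning to each relevant ancestor on the current root-to-node path a state in a constant-size set: roughly, whether that ancestor vertex is still unmatched, or has been matched downward into the current subtree, or is matched elsewhere. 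This is exactly where the factor in the running time comes from: summing over all such state-functions along a path of length $d'$ costs $c^{d'}\cdot n^{\Oh(1)}$ for a constant $c$ equal to the number of states.

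To get down to $5^d$ rather than the naïve $8^d$ (i.e.\ $c=2$ per copy and two copies per original node, giving $c'=4$ per original node and $4^d$ — or $8^d$ after a looser bookkeeping), I would exploit symmetries of the recursive calls in the spirit of Nederlof's $\#k$-\textsc{Multi-Set-Cover} algorithm~\cite{mscnederlof}: many of the $c'^d$ state-functions produce the same subproblem, or subproblems whose counts are related, so one can collapse them and evaluate only $5^d\cdot n^{\Oh(1)}$ essentially-distinct subproblems. Concretely, for a pair of twins $(v_L,v_R)$ the three relevant joint states (twin edge used; $v_L$ matched down-left and $v_R$ matched down-right; one of $v_L,v_R$ matched down while the other matched up) do not multiply freely with the three states of the next twin pair but obey a transfer relation with a $5$-dimensional ``profile'' per level, and carrying this $5$-state profile down the tree gives the bound. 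The main obstacle I expect is precisely this last step: setting up the state space so that (i) every recursive subproblem is again an instance of ``weighted perfect matchings with ancestor-states,'' (ii) correctness is transparent (each perfect matching of $G'$ is counted exactly once across the branches — this will be the verification in Section~\ref{sec:veri}), and (iii) the symmetry reduction is valid, i.e.\ the merged subproblems genuinely have equal or linearly-combinable counts. Bounding the running time and space is then routine: depth $\Oh(d)$, constant branching data per level after reduction, polynomially many arithmetic operations per node, and only a root-to-node path plus $\Os(1)$-degree polynomials stored at any time, giving $\Os(5^d)$ time and $\Os(1)$ space as claimed.
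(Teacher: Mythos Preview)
Your reduction to perfect matchings is broken, and the break is not a detail that Section~\ref{sec:cyctomatch} will fix. In your $G'$ the only non-twin edges are $u_Lv_L$ and $u_Rv_R$, so along any cycle of a would-be cycle cover $F$ the matching is forced to alternate between $L$-edges and $R$-edges. This alternation closes up only if the cycle has even length; for an odd cycle (e.g.\ a triangle, or a Hamiltonian cycle when $n$ is odd) there is \emph{no} perfect matching of your $G'$ that projects to it. Concretely, for $G=K_3$ your $G'$ is two disjoint triangles plus the three twin edges, and every perfect matching of it uses at least one twin edge, so the triangle cycle cover is never represented. Relatedly, even for all-even $F$ your construction yields $2^{\cc(F)}$ matchings, not $2^{n-\ell+\cc(F)}$: an isolated vertex must take its twin edge, so there is no freedom there, and the factor $2^{n-\ell}$ for the cut on isolated vertices is simply absent.

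The paper's construction avoids both issues by putting \emph{all four} edges $u^sv^t$, $s,t\in\{0,1\}$, for every $uv\in E$, and by carrying the cut $(L,R)$ as a separate object rather than trying to encode it via the $L$/$R$ labels of the copies; the bijection is then with pairs $(M,(L,R))$, and each partial cycle cover $F$ (odd cycles included) has exactly $2^{|F|}$ simple perfect matchings projecting to it. The five states $\{\zero,\onel,\oner,\twol,\twor\}$ that give the base $5$ are not a symmetry reduction on an $8$-state profile as you suggest, but arise directly: the subscript records the side of the cut and the digit records, via inclusion--exclusion, how many of the two copies $u^0,u^1$ are allowed to be touched by $F$; state $\zero$ leaves the cut side unconstrained, which is why there are five states and not six. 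Your outline of the recursion (``guess how the root is matched'') is also off-track: the paper never branches on which neighbour a vertex is matched to, since that would immediately cost a factor depending on the degree; instead it branches only on the five states per tail vertex and lets the leaf computation absorb all actual edge choices.
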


We now show how to combine Lemma~\ref{lem:cut} with Lemma~\ref{lem:count} to prove Theorem~\ref{main-theorem}.

\begin{proof}[Proof of Theorem~\ref{main-theorem} assuming Lemma~\ref{lem:count}]
 Let $N=2|E|$. The algorithm proceeds as follows. 
 First, for every edge $e\in E$, sample its weight $\omega(e)$ uniformly and independently at random from the set $\{1,\ldots,N\}$.
 Next, for each $w\in \{1,\ldots,N|E|\}$ compute the number $|\Cc_w|$ in time $\Os(5^d)$ and space $\Os(1)$ using the algorithm of Lemma~\ref{lem:count}.
 If for some $w$ the number $|\Cc_w|$ is not divisible by $2^{n-\ell+k+1}$, then output that there exists a solution.
 Otherwise, output that there is no solution.
 
 It is clear that the algorithm runs in time $\Os(5^d)$ and uses $\Os(1)$ space, so it remains to argue the correctness.
 On one hand, observe that if $\Ss=\emptyset$, then $\Ss_w=\emptyset$ for all $w\in \Z$, hence by Lemma~\ref{lem:cut} all the computed numbers $|\Cc_w|$ will be indeed divisible by $2^{n-\ell+k+1}$.
 Therefore, there are no false positives.
 On the other hand, if $\Ss\neq \emptyset$, then the Isolation Lemma implies that with probability at least $\frac{|E|}{N}=\frac{1}{2}$ there exists $w\in \Z$ such that $|\Ss_w|=1$.
 Note that it must hold that $w\in \{1,\ldots,N|E|\}$.
 Denoting $\Ss_w=\{F\}$, by Lemma~\ref{lem:cut} we have $|\Cc_w|\equiv 2^{n-\ell+\cc(F)}\bmod 2^{n-\ell+k+1}$. As $\cc(F)\leq k$, the number $|\Cc_w|$ is then not divisible by $2^{n-\ell+k+1}$ and the algorithm 
 correctly reports the positive outcome.
\end{proof}

Hence, it remains to prove Lemma~\ref{lem:count}.

\section{From cycle covers to matchings}\label{sec:cyctomatch}

For the proof of Lemma~\ref{lem:count}, instead of counting the number of suitable partial cycle covers, 
we find it more convenient to count the number of perfect matchings in an auxiliary graph.
Note, that this concept is natural when using \emph{inclusion-exclusion branching}
technique. A similar auxiliary graph arises in the algorithm for 
$\#k$-\textsc{Multi-Set-Cover}~\cite{mscnederlof}.

We define a graph $G'$ as follows. The vertex set $V'$ of $G'$ is
$$V'\coloneqq \{\,u^0,u^1\ \colon\ u\in V\,\}.$$
That is, we put two copies of each vertex of $G$ into the vertex set of $G'$.
The edge set $E'$ of $G'$ is the union of the following two sets:
\begin{align*}
 E'_0 & \coloneqq \{\,u^0u^1\ \colon\ u\in V\,\},\\
 E'_1 & \coloneqq \{\,u^0v^0,u^0v^1,u^1v^0,u^1v^1\ \colon\ uv\in E\,\}.
\end{align*}
In other words, for every vertex $u\in V$ we put an edge in $E'_0$ connecting the two copies of $u$ in $V'$,
while for every edge $uv\in E$ we put four different edges in $E'_1$, each
connecting a copy of $u$ with a copy of $v$ in $V'$. See Figure~\ref{fig:graph_gprime} for a visualization of the construction of $G'$.

Let $\pi\colon E'_1\to E$ be the natural projection from $E'_1$ to $E$: for each $uv\in E$ and $s,t\in \{0,1\}$, we set $\pi(u^sv^t)=uv$.
We extend the mapping $\pi$ to all subsets $F\subseteq E'$ by setting $\pi(F)\coloneqq \pi(F\cap E'_1)$.
We also extend the weight function $\omega$ to the edges of $E'$ by putting $\omega(e)=0$ for each $e\in E'_0$ and $\omega(e)=\omega(\pi(e))$ for each $e\in E'_1$.

A set of edges $F$ in $G'$ shall be called {\em{simple}} if for every $e\in E$, we have
$$|F\cap \pi^{-1}(e)|\leq 1.$$
For now, we mainly focus on simple perfect matchings in $G'$. We observe that they are in correspondence with partial cycle covers in $G$, as explained next.

\begin{figure}[ht!]
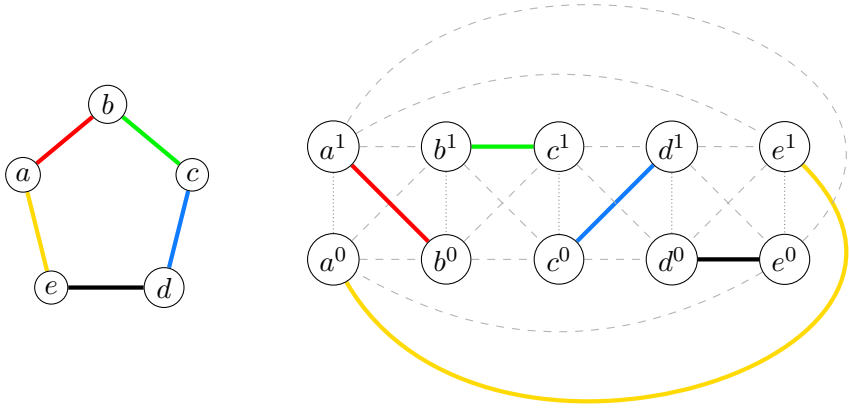

    \centering
    \tikzfig{img/definition_g}
    \vspace{-1cm}
	\caption{Construction of the graph $G'$ from $G=C_5$, together with a simple perfect matching $M$ that projects in $\pi$ to $E$.
	Solid edges represent $M$, dashed edges belong to $E'_1\setminus M$, while dotted edges comprise $E'_0$.}
	\label{fig:graph_gprime}
\end{figure}

\begin{lemma}\label{lem:covers-matchings}
 For every simple perfect matching $M$ in $G'$, the set $\pi(M)$ is a partial cycle cover in $G$ of size $|M\cap E'_1|$.
 Moreover, for every partial cycle cover $F$ in $G$, there are exactly $2^{|F|}$ simple perfect matchings $M$ in $G'$ for which $F=\pi(M)$.
\end{lemma}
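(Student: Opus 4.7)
The plan is to prove both directions by a purely local analysis at each vertex of $G$, exploiting the fact that the gadget attached to a single vertex $u$ consists of just the two copies $u^0, u^1$ and one distinguished edge $u^0u^1 \in E'_0$ between them.

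For the forward direction, let $M$ be a simple perfect matching of $G'$. I would first observe that simplicity makes $\pi$ injective on $M \cap E'_1$, which gives the size claim $|\pi(M)| = |M \cap E'_1|$ immediately. Then, to show that $\pi(M)$ is a partial cycle cover, I fix a vertex $u \in V$ and case on whether $u^0 u^1 \in M$. If yes, then no $E'_1$-edge of $M$ is incident to either copy of $u$, so $\deg_{\pi(M)}(u) = 0$. If no, then $u^0$ and $u^1$ are each matched by an edge of $M \cap E'_1$; these two edges project to two edges of $G$ both incident to $u$, which must be distinct because otherwise both would lie in $\pi^{-1}(e)$ for a common $e$, violating simplicity. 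Hence $\deg_{\pi(M)}(u) = 2$.

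For the reverse direction, I fix a partial cycle cover $F$ and build all simple perfect matchings $M$ with $\pi(M) = F$ by independent per-vertex choices. Let $V_F$ be the set of vertices with $\deg_F(u) = 2$. For every $u \notin V_F$, there is no way to match $u^0$ or $u^1$ via $E'_1$ without forcing some edge into $\pi(M) \setminus F$, so we must take $u^0u^1 \in M$, with no choice. For $u \in V_F$ with incident $F$-edges $uv, uw$, the condition $\pi(M) = F$ combined with simplicity forces $M$ to contain exactly one lift of $uv$ and one lift of $uw$; since $u^0$ and $u^1$ must each be matched exactly once, these two lifts must use the two copies of $u$ in opposite ways. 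Choosing which of $u^0, u^1$ is matched by the lift of $uv$ (the other is then matched by the lift of $uw$) yields exactly $2$ options at $u$.

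The one point that requires care is that these vertex-local choices combine independently into a valid simple perfect matching and that distinct choices produce distinct matchings. I would phrase it as follows: for each edge $uv \in F$, the unique selected lift $u^s v^t \in M$ has superscript $s$ determined solely by the choice at $u$ and superscript $t$ determined solely by the choice at $v$; since these two coordinates come from independent decisions at different vertices, every one of the $2^{|V_F|}$ global configurations produces a valid simple perfect matching, and different configurations clearly differ on at least one lifted edge. Using that every connected component of $F$ is a cycle with equally many vertices and edges, we have $|V_F| = |F|$, yielding the desired count of $2^{|F|}$. I do not foresee a substantive obstacle beyond carefully writing the independence argument in this last step.
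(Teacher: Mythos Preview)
Your proposal is correct and follows essentially the same approach as the paper. The paper formalizes your per-vertex binary choice as a \emph{binding} function $f\colon U\to F$ assigning to each covered vertex one of its two incident $F$-edges, and defines the matching via $M(f)=\{v^0v^1:v\notin U\}\cup\{u^{[f(u)=uv]}v^{[f(v)=uv]}:uv\in F\}$; this is precisely your ``choose which copy of $u$ is matched along $uv$'' argument, with the same independence and bijectivity reasoning.
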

\begin{proof}
For the first assertion, observe that if for some $u\in V$, $M$ matches $u^0$ with some vertex of the form $v^t$ for $v\neq u$,
then $u^{1}$ has to be matched by $M$ with some vertex of the form $\bar{v}^{\bar{t}}$, where $\bar{v}\neq u$ and $\bar{v}\neq v$; the latter inequality follows from the simplicity of $M$.
Then $\deg_{\pi(M)}(u)=2$.
On the other hand, if $M$ matches $u^0$ with $u^1$, then $\deg_{\pi(M)}(u)=0$.
Thus $\deg_{\pi(M)}(u)\in \{0,2\}$ for every vertex $u\in V$, so $\pi(M)$ is a partial cycle cover in $G$.
The fact that $|\pi(M)|=|M\cap E'_1|$ follows directly from the construction and the simplicity of $M$.

For the second assertion, consider any partial cycle cover $F$ in $G$. Let $U$ be the set of vertices incident to the edges of $F$; then $|U|=|F|$.
Let a {\em{binding}} be any function $f\colon U\to F$ such that for every $u\in U$, $f(u)$ is one of the two edges of $F$ incident to $u$. 
Observe that if for a binding $f$ we define
$$M(f) = \{v^0v^1\colon v\in V\setminus U\} \cup \{u^{[f(u)=uv]}v^{[f(v)=uv]}\colon uv\in F\},$$
where $[\varphi]=1$ if condition $\varphi$ holds and $[\varphi]=0$ otherwise, then $M(f)$ is a simple perfect matching in $G'$ satisfying $\pi(M(f))=F$.
Clearly, matchings $M(f)$ obtained for different bindings $f$ are pairwise different. 
Moreover, it is easy to see that every simple perfect matching $M$ satisfying $\pi(M)=F$ is of the form $M=M(f)$ for some binding $f$. 
Indeed, for every $u\in U$ we just set $f(u)=\pi(e)$, where $e\in E'_1$ is the edge of $M$ that is incident to $u^1$.
Since the total number of different bindings is $2^{|F|}$, we conclude that there are exactly $2^{|F|}$ simple perfect matchings $M$ in $G'$ satisfying $\pi(M)=F$.
\end{proof}

Lemma~\ref{lem:covers-matchings} motivates introducing the following analogues of the sets $\Cc_w$. For $w\in \Z$, we define
\begin{align*}
\Mm_w & \coloneqq \{\,(M,(L,R)) & \colon & M \textrm{ is a simple perfect matching in }G',\ \omega(M)=w,\\
& & & |M\cap E'_1|=\ell, \textrm{ and } (L,R)\textrm{ is a cut of }V\textrm{ consistent with }\pi(M)\,\}. 
\end{align*}

Since for every simple perfect matching $M$ in $G'$ we have $\omega(M)=\omega(\pi(M))$, from Lemma~\ref{lem:covers-matchings} we immediately obtain the following.

\begin{corollary}
 For every $w\in \Z$, we have $|\Mm_w|=2^\ell\cdot |\Cc_w|$.
\end{corollary}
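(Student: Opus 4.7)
The plan is to decompose the enumeration of $\Mm_w$ according to the projection $\pi$, effectively grouping pairs $(M,(L,R))\in\Mm_w$ by the partial cycle cover $F=\pi(M)$ they induce in $G$, and then applying Lemma~\ref{lem:covers-matchings} fiber by fiber.

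First I would verify that whenever $(M,(L,R))\in\Mm_w$, the set $F\coloneqq \pi(M)$ belongs to $\Rr_w$. The degree condition and the equality $|F|=|M\cap E'_1|=\ell$ come directly from Lemma~\ref{lem:covers-matchings}, so $F\in\Rr$. For the weight, observe that by the way $\omega$ is extended, every edge of $E'_0$ has weight $0$, every edge of $E'_1$ has weight equal to the weight of its image under $\pi$, and simplicity of $M$ implies $\pi$ is injective on $M\cap E'_1$. Hence $\omega(M)=\omega(M\cap E'_1)=\omega(\pi(M))=\omega(F)$, and $\omega(M)=w$ gives $F\in\Rr_w$.

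Next I would partition $\Mm_w$ along the fibers $\pi^{-1}(F)$ for $F\in\Rr_w$. For fixed $F$, the condition that $(L,R)$ is consistent with $\pi(M)=F$ depends only on $F$, not on the particular preimage $M$, so the number of admissible cuts is exactly the same for every $M$ with $\pi(M)=F$ and equals the number of cuts of $V$ consistent with $F$. Meanwhile, by the second statement of Lemma~\ref{lem:covers-matchings}, the number of simple perfect matchings $M$ in $G'$ with $\pi(M)=F$ equals $2^{|F|}=2^\ell$. Multiplying the two counts, the contribution of $F$ to $|\Mm_w|$ is $2^\ell$ times the number of cuts of $V$ consistent with $F$.

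Finally I would sum over $F\in\Rr_w$ and recognise the result as $|\Cc_w|$:
\[
|\Mm_w| \;=\; \sum_{F\in\Rr_w} 2^\ell \cdot \bigl|\{(L,R)\ \text{cut of } V\ \text{consistent with}\ F\}\bigr| \;=\; 2^\ell\cdot |\Cc_w|.
\]
There is no real obstacle here; the only subtle point is the bookkeeping that ensures $\omega(M)=\omega(\pi(M))$ under simplicity, which makes the decomposition indexed by $F\in\Rr_w$ well-defined and the corollary follows immediately from Lemma~\ref{lem:covers-matchings}.
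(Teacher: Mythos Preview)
Your argument is correct and follows exactly the approach intended by the paper: the paper states the corollary as an immediate consequence of Lemma~\ref{lem:covers-matchings} together with the observation that $\omega(M)=\omega(\pi(M))$ for simple perfect matchings, and you have spelled this out in full detail by grouping pairs in $\Mm_w$ according to $\pi(M)\in\Rr_w$ and applying the $2^{|F|}=2^\ell$ count fiberwise.
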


Therefore, to prove Lemma~\ref{lem:count} it suffices to apply the algorithm provided by the following lemma and divide the outcome by $2^\ell$.

\begin{lemma}\label{lem:count2}
  Given $w\in \Z$ and a weight function $\omega\colon E\to \{1,\ldots,N\}$, where $N=\Os(1)$, the number $|\Mm_w|$ can be computed in time $\Os(5^d)$ and space $\Os(1)$.
\end{lemma}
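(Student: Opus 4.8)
The plan is to count the pairs $(M,(L,R))\in\Mm_w$ by a recursive (branching) algorithm that follows the elimination tree $T$ from the root downwards, processing vertices in a way that keeps only polynomially much information in memory at any time. The key realization is that a simple perfect matching $M$ in $G'$ together with a consistent cut $(L,R)$ of $V$ can be described by deciding, for each original vertex $u$, (i) which side of the cut $u$ lies on, (ii) whether $u$ is matched ``internally'' by the edge $u^0u^1\in E'_0$ or whether each of $u^0,u^1$ is matched to a copy of some neighbour, and (iii) in the latter case, which neighbour each copy of $u$ is matched to. Since every edge of $G$ runs between a vertex and one of its ancestors in $T$, when we are at a node $u$ of $T$ with ancestors $\tail(u)$, the only ``open'' obligations that can still be satisfied below $u$ are: the two copies of each vertex in $\tail[u]$ that have not yet been matched, and the side of the cut each such vertex sits on. The recursion at node $u$ therefore branches over the $\Os(1)$ local choices for $u$ (side of cut for $u$; how $u^0$ and $u^1$ get matched among: internal edge, an edge down into $\subtree(u)$, an already-committed edge up to an ancestor), and for each child $c$ of $u$ it makes a recursive call passing along the updated list of unmatched copies of vertices in $\tail[c]=\tail[u]\cup\{u\}$ together with their cut sides; the numbers returned by the children are multiplied, and we also track the running weight and the running count of used $E'_1$-edges so that at the end we can select exactly the contributions with total weight $w$ and exactly $\ell$ edges from $E'_1$.

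The naive accounting of this branching gives roughly $8^{\td}\cdot n^{\Oh(1)}$: at depth $d$ each of the $d$ ancestor-vertices can be in one of about $8$ states (two cut sides times four matching statuses for the pair $(u^0,u^1)$ relative to the current subtree). The plan, following the trick used by Nederlof~\cite{mscnederlof} for $\#k$-\textsc{Multi-Set-Cover}, is to exploit symmetries so that only $5$ states per ancestor are effectively needed. Concretely: (a) the two cut sides $L$ and $R$ are symmetric, so instead of summing over $2^{\cc}$ signed assignments we can fix one side and use a factor, collapsing a factor of $2$ per component; (b) the two copies $u^0$ and $u^1$ of a vertex play symmetric roles in $G'$ (every edge of $E'_1$ comes in the quadruple $u^sv^t$), so the matching status of the pair $(u^0,u^1)$ can be recorded up to swapping the two copies, which merges several of the $8$ states. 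Carrying the bookkeeping carefully and multiplying in the appropriate powers of $2$ to compensate for the collapsed symmetries yields $5$ states per ancestor, hence running time $\Os(5^{\td})$. The space usage is polynomial because at any moment the recursion stack has depth at most $\td$, and at each level we store only $\Os(\td)$ words describing the states of ancestors plus the running weight and edge count, all of which have polynomial bit length.

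The main obstacle I expect is getting the symmetry-reduction bookkeeping exactly right: one must make sure that after fixing one cut side and after identifying the two vertex-copies, every simple perfect matching with a consistent cut is counted with exactly the right multiplicity (neither over- nor under-counting), that the compensating powers of $2$ (for the $\cc(F)$ components and for the per-vertex copy symmetry) combine to the claimed $2^{n-\ell+\cc(F)}$ factor implicit in $\Mm_w$, and that the $5$ states are genuinely closed under the recursive transitions (i.e.\ that the parent can reconstruct its own state from the states reported by its children). A secondary subtlety is the simplicity constraint $|M\cap\pi^{-1}(e)|\le 1$: since the two endpoints of an edge $uv\in E$ are processed at different depths (the ancestor before the descendant), one must ensure the lower vertex never ``re-uses'' an edge already committed at the higher vertex — this is handled by having the ancestor's state record precisely which of its copies has been matched downward and the descendant branch only over the consistent completions. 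I would verify correctness by a direct induction on $T$ (this is what Section~\ref{sec:veri} is for), showing that the quantity computed at the root equals $|\Mm_w|$.
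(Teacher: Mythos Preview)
Your high-level architecture is right: recurse down the elimination tree, keep one state per ancestor, aggregate over children, and at the root read off the coefficient with weight $w$, total size $n$, and $\ell$ edges in $E'_1$. But the mechanism you propose for making the child-aggregation multiplicative, and your explanation of where the $5$ states come from, are both off, and as written the plan would not go through.

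The crux is your sentence ``the numbers returned by the children are multiplied.'' If the ancestor state records which of $y^0,y^1$ are \emph{still unmatched}, then the counts over different children of $u$ are \emph{not} multiplicative: ``$y^0$ is matched exactly once somewhere in $\subtree(u)$'' is a disjoint-union (convolution) constraint across the children, not a pointwise product. The paper avoids this by \emph{not} tracking matching status at all. Along the recursion it counts arbitrary simple edge sets $F\subseteq\sheaf[u]$ subject only to (i) every vertex of $\subtree(u)'$ (resp.\ $\subtree[u]'$) being \emph{covered at least once}, and (ii) per-ancestor constraints of the form ``no edge of $F$ may touch $y^1$'' or ``no edge of $F$ may touch $y^0$ or $y^1$''. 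Forbidding constraints of type (ii) factor perfectly over children, which is exactly Lemma~\ref{lem:inclusive-step}. The fact that $F$ is a perfect matching is recovered only at the root, by selecting the coefficient with $|F|=n$ (Lemma~\ref{lem:root}); the passage from ``at least one edge on each of $u^0,u^1$'' to the inclusive count is a two-element inclusion--exclusion performed locally at $u$ (Lemma~\ref{lem:exclusive-step}). This inclusion--exclusion reformulation is the missing idea in your plan.

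Accordingly, the $5$ states are not obtained by collapsing the $L/R$ symmetry of the cut as you suggest; both cut sides are kept. The states are $\{\zero,\onel,\oner,\twol,\twor\}$: both copies forbidden (cut side then unconstrained), exactly one copy forbidden with $y\in L$ or $y\in R$, and no copy forbidden with $y\in L$ or $y\in R$. The $u^0\leftrightarrow u^1$ symmetry you mention \emph{is} used, but only to merge the two ``forbid $u^0$'' / ``forbid $u^1$'' branches into a single term with coefficient $2$ in the inclusion--exclusion of Lemma~\ref{lem:exclusive-step}; that is what takes you from $6$ to $5$ branches, not from $8$. Your proposal to ``fix one cut side and compensate by a factor'' does not work here, because an ancestor's cut side interacts with its descendants' sides through the consistency constraint on $F$ and cannot be factored out globally.
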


We are left with proving Lemma~\ref{lem:count2}.

\section{The Count part}\label{sec:count}

In this section we execute the Count part of the technique by proving Lemma~\ref{lem:count2}. Let us first discuss the intuition behind the approach.

The basic idea is that we will compute the number $|\Mm_w|$ using bottom-up dynamic programming over the given elimination tree $T$.
In order to achieve polynomial space complexity, this dynamic programming will be cast as a standard recursion, but
for this to work, we need that the recurrence equations governing the dynamic programming have a specific form.
In essence, whenever we compute an entry of the dynamic programming table at some vertex $u$, the value should be obtained as a simple aggregation of single entries from the tables of the children of $u$.
The most straightforward approach to computing $|\Mm_w|$ would be to count partial perfect matchings and to remember, in the states corresponding to $u$, subsets of $\tail[u]$ consisting of vertices matched to $\subtree(u)$.
This would yield a dynamic programming algorithm that is {\em{not}} of the form required for the space complexity reduction.
However, we show that by counting different objects than partial perfect matchings, and using the inclusion-exclusion principle at every computation step, we can reorganize the computation so that the space reduction is possible.

We remark that even though at the end of the day our algorithm relies only on basic ideas such as branching and inclusion-exclusion,
there is a deeper intuition behind the definitions of the computed values.
In fact, from the right angle our algorithm can be seen as an application of the technique of {\em{saving space by algebraization}}, introduced by Lokshtanov and Nederlof~\cite{algebraization},
which boils down to applying the Fourier transform on the lattice of subsets in order to turn subset convolutions into pointwise products.
We refer the reader to~\cite{stacs2020,furer-1,furer-2,pw-stacs2016} for other applications of this technique in the context of treedepth-based algorithms.



\newcommand{\Fun}{\mathsf{Func}}
\newcommand{\lf}{\mathsf{left}}
\newcommand{\sheaf}{\mathsf{sheaf}}
\newcommand{\zero}{0}
\newcommand{\onel}{1_{\mathsf{L}}}
\newcommand{\oner}{1_{\mathsf{R}}}
\newcommand{\twol}{2_{\mathsf{L}}}
\newcommand{\twor}{2_{\mathsf{R}}}

\paragraph*{Partial objects.} We start with defining the partial objects that will be counted by the algorithm.

For every vertex $u\in V$, let us order the children of $u$ in an arbitrary manner. Thus, every non-leaf vertex $u$ has a unique left-most (first in the order) child. 
For every $u\in V$, let $\lf(u)$ be the left-most leaf descendant of $u$, that is, the leaf obtained by starting at $u$ and iteratively moving to the left-most child of the current vertex until a leaf is found.
This induces a mapping $\ol{\lf}(\cdot)$ on the edges of $E'$ as follows: 
for an edge $e=u^0u^1\in E'_0$, we put $\ol{\lf}(e)\coloneqq\lf(u)$, while for
an edge of the form $e=u^sv^t\in E'_1$, where $u\neq v$ and $s,t\in \{0,1\}$, we
let $\ol{\lf}(e)\coloneqq \lf(v)$, 
where $v$ is the descendant of $u$ in $T$. Now, for every $u\in V$ we define the {\em{sheaf}} of $u$ as follows:
$$\sheaf[u]\coloneqq \bigcup_{v \text{ is a leaf in } \subtree[u]} \ol{\lf}^{-1}(v)\ \subseteq\ E'.$$

\begin{figure}[ht!]
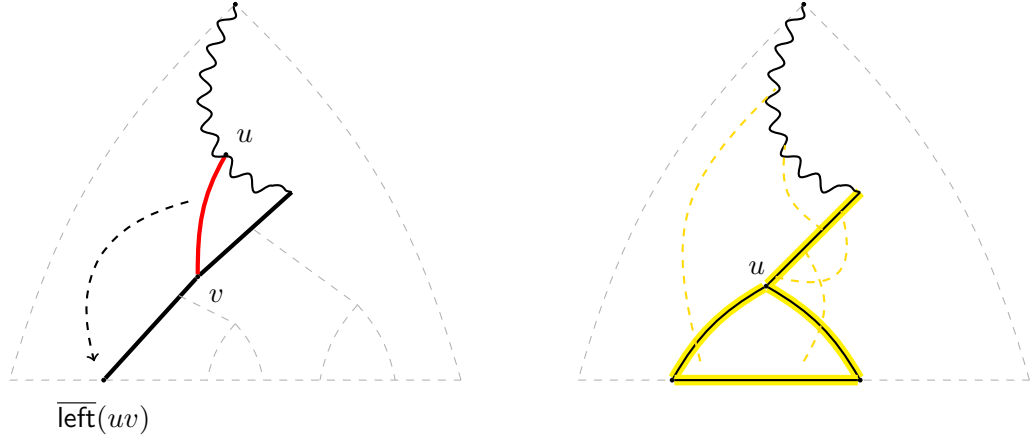

    \centering
    \begin{subfigure}[b]{0.45\textwidth}
		\tikzfig{img/definition_sheaf}
    \end{subfigure}
    \begin{subfigure}[b]{0.45\textwidth}
    	\tikzfig{img/definition_sheaf2}
    \end{subfigure}
    \caption{Schematic definitions of $\ol{\mathsf{left}}$ and $\mathsf{sheaf}$. 
    Left panel presents the definition of the mapping $\ol{\mathsf{left}}(uv)$. Since $v$ is a descendant of $u$, $\ol{\mathsf{left}}(uv)$ is the leftmost descendant of $v$.  
    Right panel presents the definition of $\mathsf{sheaf}[u]$. Any edge with the lower endpoint in the yellow highlighted region is a part of $\mathsf{sheaf}[u]$.}
\label{fig:definitions_sheaf}
\end{figure}

See Figure~\ref{fig:definitions_sheaf} for a schematic presentation of
$\sheaf[u]$. The next observation follows immediately from the definition.

\begin{observation}\label{lem:sheaf-partition}
 For every vertex $u$ that is not a leaf in $T$, the family $\{\sheaf[v]\colon v\in \child(u)\}$ is a partition of $\sheaf[u]$.
\end{observation}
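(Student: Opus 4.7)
The plan is to chase the definitions. I would first observe the following purely tree-theoretic fact: since $u$ is not a leaf of $T$, the set $L_u$ of leaves in $\subtree[u]$ decomposes as the disjoint union $\bigsqcup_{v\in\child(u)} L_v$, where $L_v$ is the set of leaves in $\subtree[v]$. Indeed, any leaf descendant of $u$ is distinct from $u$ itself, and therefore lies in $\subtree[v]$ for a unique child $v$ of $u$; conversely every leaf of any $\subtree[v]$ is a leaf of $\subtree[u]$.

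Second, I would check that $\ol{\lf}\colon E'\to V(T)$ is genuinely a function, so that taking preimages commutes with disjoint unions. For $e=u^0u^1\in E'_0$ the value $\lf(u)$ is unambiguous. For $e=u^s v^t\in E'_1$ coming from some $uv\in E$, the elimination-tree property forces $u$ and $v$ to be comparable in $T$; the definition of $\ol{\lf}(e)$ then selects $\lf(v)$ where $v$ is the deeper of the two endpoints, which is well defined.

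Putting these together, $\sheaf[u]$ is by definition $\ol{\lf}^{-1}(L_u)$, so
$$\sheaf[u] \;=\; \ol{\lf}^{-1}(L_u) \;=\; \ol{\lf}^{-1}\Bigl(\bigsqcup_{v\in\child(u)} L_v\Bigr) \;=\; \bigsqcup_{v\in\child(u)} \ol{\lf}^{-1}(L_v) \;=\; \bigsqcup_{v\in\child(u)} \sheaf[v],$$
which is exactly the desired partition. I do not anticipate any real obstacle here; the statement is essentially a sanity check that the $\sheaf$ construction refines cleanly along the elimination tree, and both disjointness and coverage follow immediately from the uniqueness of the child of $u$ that contains a given strict descendant.
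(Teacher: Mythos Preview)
Your proposal is correct and is precisely the definition-chase the paper has in mind; the paper itself gives no proof beyond noting that the observation ``follows immediately from the definition,'' and your argument simply makes this explicit.
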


We now move to the description of partial objects.
In the following, we will use the convention that if $Z\subseteq V$, then we write $Z'\coloneqq \{z^0,z^1\colon z\in Z\}\subseteq V'$ for the set of copies of vertices of $Z$ in $G'$.
We first introduce the following notion that facilitates the definition of the partial objects:
\begin{definition}\label{def:compatible}
Suppose $X$ and $Y$ are two disjoint subsets of $V$. Further, suppose $S\subseteq E'$ is a set of edges whose all endpoints are in $X'\cup Y'$.
For a function $f\colon Y\to \{\zero,\onel,\oner,\twol,\twor\}$, we shall say that a pair $(F,(L,R))$, where $F\subseteq S$ and $(L,R)$ is a cut of $X\cup Y$,
is {\em{compatible}} with $f$ if the following properties hold:
\begin{itemize}
 \item $F$ is simple and consistent with the cut $(L',R')$;
 \item $f^{-1}(\{\onel,\twol\})\subseteq L$ and $f^{-1}(\{\oner,\twor\})\subseteq R$;
 \item for every $y\in Y$ with $f(y)=\zero$, no edge of $F$ is adjacent to $y^0$ or to $y^1$;
 \item for every $y\in Y$ with $f(y)\in \{\onel,\oner\}$, no edge of $F$ is adjacent to $y^1$; and
 \item every vertex of $X'$ is incident to some edge of $F$.
\end{itemize}
\end{definition}

In essence, to define partial objects constructed for a vertex $u$, in the definition above we take $Y$ to be the tail of $u$, $X$ to be the subtree of $u$, and $S$ to be the sheaf of $u$.
Then with every function $f\colon Y\to \{\zero,\onel,\oner,\twol,\twor\}$ we can associate all partial objects that are compatible with it.
However, it makes a difference whether we include or exclude the vertex $u$ from $Y$; that is, whether we consider $Y=\tail[u]$ and $X=\subtree(u)$, or $Y=\tail(u)$ and $X=\subtree[u]$.
Therefore, we distinguish \emph{inclusive} and \emph{exclusive} partial objects.
\begin{definition}
For every $u\in V$ and every function $f\colon \tail[u]\to \{\zero,\onel,\oner,\twol,\twor\}$, 
we define the set of {\em{inclusive partial objects}} for $u$ and $f$, denoted $\Mm[u,f]$, as the set of all pairs $(F,(L,R))$ that are compatible with $f$, where $X=\subtree(u)$, $Y=\tail[u]$, and $S=\sheaf[u]$.
\end{definition}
\begin{definition}
For every $u \in V$ and for every function $f\colon \tail(u)\to \{\zero,\onel,\oner,\twol,\twor\}$, 
the set of {\em{exclusive partial objects}} for $u$ and $f$, denoted $\Mm(u,f)$, is the set of all pairs $(F,(L,R))$ that are compatible with $f$, where $X=\subtree[u]$, $Y=\tail(u)$, and $S=\sheaf[u]$.
\end{definition}
Observe that both in the inclusive and in the exclusive case we have $X\cup Y=\broom[u]$, so $(L,R)$ is a cut of $\broom[u]$, and $F\subseteq S=\sheaf[u]$.
Note also that we {\em{do not}} require $F$ to be a matching in~$G'$.
For convenience, by $\Fun[u]$ and $\Fun(u)$ we shall denote the sets of all functions from $\tail[u]$, respectively $\tail(u)$, to $\{\zero,\onel,\oner,\twol,\twor\}$.

Finally, we stratify the sets $\Mm[u,f]$ by defining, for all $a,b,c\in \N$, $\Mm_{a,b,c}[u,f]$ as the set of all the pairs $(F,(L,R))\in \Mm[u,f]$ such that $\omega(F)=a$, $|F|=b$, and $|F\cap E'_1|=c$.
Sets $\Mm_{a,b,c}(u,f)$ are defined analogously. The following lemma follows easily from the definitions.

\begin{lemma}\label{lem:root}
 If $r$ is the root of $T$, then $\Mm_w = \Mm_{w,n,\ell}(r,\emptyset)$.
\end{lemma}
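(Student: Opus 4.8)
\textbf{Proof plan for Lemma~\ref{lem:root}.}

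The plan is to unfold both sides of the claimed identity directly from the definitions and check that the defining conditions coincide. Let $r$ be the root of $T$. Since $\tail(r) = \emptyset$, the only function in $\Fun(r)$ is the empty function $\emptyset$, and $\tail[r] = \{r\}$ and $\subtree[r] = V$. Hence in the definition of $\Mm(r,\emptyset)$ we have $X = \subtree[r] = V$, $Y = \tail(r) = \emptyset$, and $S = \sheaf[r]$. First I would verify that $\sheaf[r] = E'$: every leaf of $T$ lies in $\subtree[r]$, and by construction the sets $\ol{\lf}^{-1}(v)$ over leaves $v$ partition $E'$ (each edge $e\in E'$ has a well-defined value $\ol{\lf}(e)$, which is some leaf). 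Thus $S = E'$, so the constraint $F \subseteq S$ becomes $F \subseteq E'$, i.e.\ no restriction.

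Next I would spell out what it means for a pair $(F,(L,R))$ to be compatible with the empty function $\emptyset$ when $X = V$, $Y = \emptyset$: the conditions involving $f$ (the second, third, and fourth bullets of Definition~\ref{def:compatible}) are vacuous since $Y = \emptyset$; the first bullet says $F$ is simple and consistent with the cut $(L',R')$; and the fifth bullet says every vertex of $V' = X'$ is incident to some edge of $F$. Combined with $\deg_F$ being bounded appropriately, the condition ``$F$ simple and every vertex of $V'$ is $F$-incident'' is exactly the condition ``$F$ is a simple perfect matching in $G'$'' — here I would note that simplicity already forces each vertex to have degree at most $1$ in $F$ (if two edges of $F$ shared an endpoint $u^s$, at least one of the conflicting incidences could be argued away, but more directly: an edge of $E'_0$ at $u^0$ plus an edge of $E'_1$ at $u^0$ is allowed by simplicity, so I must be a little careful). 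The cleanest route is: ``every vertex of $V'$ incident to an edge of $F$'' together with $|F| = b$ where we will set $b = n$; since $|V'| = 2n$ and a set of $n$ edges covering all $2n$ vertices must be a perfect matching, the perfect-matching condition is equivalent to $F$ covering $V'$ once we also impose $|F| = n$. So rather than deriving ``matching'' from ``simple'', I would derive it from the cardinality constraint in the stratification.

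Accordingly, I would finish by intersecting with the stratification $\Mm_{w,n,\ell}(r,\emptyset)$: this picks out those compatible pairs with $\omega(F) = w$, $|F| = n$, and $|F\cap E'_1| = \ell$. As just argued, $|F| = n$ together with $F$ covering all $2n$ vertices of $V'$ forces $F$ to be a perfect matching; $F$ is also simple; $\omega(F) = w$; $|F\cap E'_1| = \ell$; and $(L,R)$ is a cut of $\broom[r] = V$ consistent with $\pi(M)$ — here I would check that consistency of $F$ with $(L',R')$ in $G'$ is equivalent to consistency of $(L,R)$ with $\pi(F)$ in $G$, which is immediate from the definition of $\pi$ and the fact that $E'_0$-edges never cross a cut of the form $(L',R')$. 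This is precisely the definition of $\Mm_w$. Conversely every element of $\Mm_w$ yields such a compatible pair. Hence $\Mm_w = \Mm_{w,n,\ell}(r,\emptyset)$.

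The only mildly delicate point — the ``main obstacle,'' though it is minor — is making sure the equivalence ``simple, covers $V'$, and $|F| = n$'' $\iff$ ``simple perfect matching with $|M\cap E'_1| = \ell$'' is stated with the cardinality constraint doing the work, rather than trying to extract the matching property from simplicity alone (which is false: simplicity permits $\deg_F(u^0) = 2$ via one $E'_0$-edge and one $E'_1$-edge). Everything else is a routine unwinding of definitions, using $\tail(r) = \emptyset$, $\subtree[r] = V$, $\broom[r] = V$, and $\sheaf[r] = E'$.
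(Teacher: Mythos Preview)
Your proposal is correct and follows essentially the same approach as the paper's proof: the paper's entire argument is the single observation that if $|F|=n$ and every vertex of $V'$ is incident to an edge of $F$, then $F$ must be a perfect matching in $G'$ (since $|V'|=2n$), after which the remaining conditions match up directly. You identify precisely this as the key point and correctly flag that simplicity alone does not yield the matching property; your additional explicit checks (that $\sheaf[r]=E'$, and that consistency of $F$ with $(L',R')$ is equivalent to consistency of $(L,R)$ with $\pi(F)$) are details the paper leaves to the reader.
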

\begin{proof}
 We observe that if $F\subseteq E'$ is such that $|F|=n$ and every vertex of $V'$ is incident to at least one edge of $F$, then $F$ has to be a perfect matching in $G'$.
 Then the remaining requirements expressed in the definition of $\Mm_{w,n,\ell}(r,\emptyset)$ exactly correspond to the restrictions on matchings considered in the definition of $\Mm_w$.
\end{proof}

Thus, our goal is to compute all the cardinalities of the sets $\Mm_{a,b,c}[u,f]$ and $\Mm_{a,b,c}(u,f)$, for all relevant choices of $a,b,c,f,u$. 

\paragraph*{Encoding accumulators in formal variables.}
Similarly as in~\cite{stacs2020,pw-stacs2016}, we encode the different choices of $a,b,c\in \N$ as degrees of formal variables $\alpha,\beta,\gamma$, so that all the relevant values $|\Mm_{a,b,c}[u,f]|$ can be stored
as coefficients of one polynomial from $\Z[\alpha,\beta,\gamma]$, and similarly for the values $|\Mm_{a,b,c}(u,f)|$.
Formally, for each $u\in V$ and each $f\in \Fun[u]$ we define the polynomial $P[u,f]\in \Z[\alpha,\beta,\gamma]$ as
$$P[u,f] = \sum_{a,b,c\in \N} |\Mm_{a,b,c}[u,f]|\cdot \alpha^a\beta^b\gamma^c,$$
and for each $f\in \Fun(u)$ we define the polynomial $P(u,f)\in \Z[\alpha,\beta,\gamma]$ as
$$P(u,f) = \sum_{a,b,c\in \N} |\Mm_{a,b,c}(u,f)|\cdot \alpha^a\beta^b\gamma^c.$$
Observe that since we assume that the weight function $\omega$ only assigns weights in $\{1,\ldots,N\}$, in the formula above the numbers $|\Mm_{a,b,c}[u,f]|$ and $|\Mm_{a,b,c}(u,f)|$ can be non-zero only
for $a\leq N\cdot |E|$, $b\leq |E|$, and $c\leq |E|$. Thus, $P[u,f]$ and $P(u,f)$ are indeed polynomials.

As argued above, each polynomial $P[u,f]$ and $P(u,f)$ has total degree at most $(N+2)\cdot |E|$, hence it is a sum of a polynomial (in $n$) number of monomials. 
Hence, we may represent each of the polynomials $P[u,f]$ and  $P(u,f)$ by just storing a polynomial-size table of the coefficients of the monomials. 
Thus, the representation of each polynomial $P[u,f]$ and $P(u,f)$ takes polynomial space, and arithmetic operations on them can be performed in polynomial time.

\paragraph*{The computation.}
The idea now is that each polynomial $P(u,f)$ can be computed using polynomials $P[u,f']$ for different extensions $f'$ of $f$,
while each polynomial $P[u,f]$ for a non-leaf $u$ can be computed using polynomials $P(v,f)$ for $v$ ranging over the children of $u$.
Moreover, polynomials $P[u,f]$ can be computed in polynomial time whenever $u$ is a leaf. These statements are encapsulated in the following lemmas, whose proofs are postponed to Section~\ref{sec:veri}.

\begin{lemma}\label{lem:exclusive-step}
 For every $u\in V$ and $f\in \Fun(u)$, we have
 $$P(u,f) = P[u,f[u\mapsto \twol]]+P[u,f[u\mapsto \twor]]-2\cdot P[u,f[u \mapsto \onel]]-2\cdot P[u,f[u \mapsto \oner]]+P[u,f[u \mapsto \zero]].$$
\end{lemma}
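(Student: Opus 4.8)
The identity is an inclusion–exclusion formula relating the exclusive objects $\Mm(u,f)$ (where $u$ is placed inside the subtree side $X$, and hence $u^0,u^1$ must both be matched) to the inclusive objects $\Mm[u,f']$ (where $u$ is moved to the tail side $Y$, and $f'$ prescribes one of the five states $\zero,\onel,\oner,\twol,\twor$ at $u$). The plan is to fix $u$ and $f \in \Fun(u)$ and argue that the right-hand side, read coefficient-by-coefficient in $\alpha^a\beta^b\gamma^c$, counts exactly $|\Mm_{a,b,c}(u,f)|$. Since all five polynomials $P[u,f[u\mapsto \cdot]]$ are built from the \emph{same} sheaf $S=\sheaf[u]$ and the \emph{same} edge-weight/size statistics, it suffices to work at the level of the sets $\Mm_{a,b,c}[u,\cdot]$ and $\Mm_{a,b,c}(u,f)$: establishing the set-level identity
$$
|\Mm(u,f)| = |\Mm[u,f[u\mapsto \twol]]| + |\Mm[u,f[u\mapsto \twor]]| - 2|\Mm[u,f[u\mapsto \onel]]| - 2|\Mm[u,f[u\mapsto \oner]]| + |\Mm[u,f[u\mapsto \zero]]|
$$
stratified by $(a,b,c)$ immediately gives the polynomial identity, because each monomial's coefficient obeys it.

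\textbf{Key steps.} First I would unpack the difference between the two sides. In $\Mm[u,f']$ we have $X=\subtree(u)$ and $Y=\tail[u]$, so the compatibility conditions constrain the edges of $F$ incident to $u^0,u^1$ according to $f'(u)$, but place \emph{no} requirement that $u^0$ or $u^1$ be covered; in $\Mm(u,f)$ we have $X=\subtree[u]$ and $Y=\tail(u)$, so both $u^0$ and $u^1$ \emph{must} be incident to an edge of $F$, with no state-based restriction at $u$. Apart from the role of $u$, the two families are governed by identical conditions (same $S$, same simplicity, same coverage of $\subtree(u)'$, same behavior on $\tail(u)$ dictated by $f$, same consistency with the cut). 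So the crux is a purely local statement about which of $u^0,u^1$ are covered and on which side of the cut $u$ lies. Concretely, for a fixed choice of $(F,(L,R))$ satisfying all the "$u$-independent" constraints and with $F\subseteq\sheaf[u]$, define a type by: (i) which of $u^0,u^1$ are covered by $F$ — call the possibilities $\{\text{neither}\}$, $\{u^1 \text{ only is uncovered}\}$, $\{u^1 \text{ covered}\}$ (note $u^1\notin$ any sheaf below $u$... actually $u^0,u^1$ may be covered by sheaf edges), and (ii) whether $u\in L$ or $u\in R$. The plan is to check that the number of $f'\in\{\zero,\onel,\oner,\twol,\twor\}$ for which $(F,(L,R))\in\Mm[u,f[u\mapsto f'(u)]]$, weighted by the coefficients $(+1,-2,-2,+1,+1)$ above, equals $1$ if both $u^0,u^1$ are covered (i.e. $(F,(L,R))\in\Mm(u,f)$) and $0$ otherwise. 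I would organize this as a short case analysis over the three coverage patterns of $\{u^0,u^1\}$ crossed with the side of $u$:
\begin{itemize}
\item If both $u^0$ and $u^1$ are covered: this pair is compatible only with $f'(u)\in\{\twol,\twor\}$, and of those exactly the one matching the side of $u$ works, so the weighted count is $+1$ from $\twol$ or $+1$ from $\twor$ — total $1$. Good: matches $\Mm(u,f)$.
\item If $u^0$ is covered but $u^1$ is not: compatible with $f'(u)\in\{\onel,\oner\}$ (side-matching gives one of them) and also with $\{\twol,\twor\}$? No — state $\twol/\twor$ imposes no constraint forbidding coverage of $u^1$, so actually this pair \emph{is} also compatible with $\twol$ or $\twor$ on the matching side. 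So it is compatible with exactly the two states on the correct side: $\onel$-or-$\oner$ contributing $-2$, and $\twol$-or-$\twor$ contributing $+1$; total $-1$. Hmm — that does not vanish.
\end{itemize}
So the case analysis needs to be set up more carefully; the correct bookkeeping is that state $\twol$ (resp. $\twor$) is compatible with \emph{any} coverage pattern of $\{u^0,u^1\}$ on side $L$ (resp. $R$), state $\onel/\oner$ is compatible only when $u^1$ is uncovered, and state $\zero$ only when both are uncovered. Then for $u$ on side $L$: both covered $\Rightarrow$ only $\twol$ works, weight $+1$; $u^1$ uncovered, $u^0$ covered $\Rightarrow$ $\twol$ ($+1$) and $\onel$ ($-2$), total $-1$ — still wrong. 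The resolution must be that the inclusive objects for state $\onel/\oner$ are defined to \emph{require} $u^0$ covered as well (state $\onel$ = "$u$ on left, $u^0$ matched, $u^1$ not"), and $\twol/\twor$ require \emph{both} covered while $\zero$ requires \emph{neither}; re-reading Definition~\ref{def:compatible}, the clauses only give upper bounds on coverage at $y$ via $f(y)$, with no lower bound, so I would need to re-examine whether coverage of $u^0$ is forced elsewhere (it is not, since $u^0\notin\subtree(u)'$). Given this tension, \textbf{the main obstacle I anticipate is pinning down the exact semantics of the five states and the resulting compatibility table} so that the alternating sum telescopes correctly — i.e. verifying that the coefficient vector $(+1,-2,-2,+1,+1)$ is precisely the Möbius-type inversion that isolates the "both covered" pattern. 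Once the correct compatibility table is in hand (I expect: $\twol$ allows coverage subsets $\{\},\{u^0\},\{u^0,u^1\}$ of the left-placed $u$ but the contribution structure is finer), the identity follows by summing the indicator over states with the stated coefficients and checking it equals $[\{u^0,u^1\}\text{ both covered}]$ for each fixed side; the weight/size statistics $(a,b,c)$ ride along untouched since $F$ and the cut are literally the same set in every term. I would then lift from the set identity to the polynomial identity term by term, completing the proof.
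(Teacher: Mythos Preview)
Your compatibility table is already correct: with $u\in L$, state $\twol$ imposes no coverage restriction on $u^0,u^1$; state $\onel$ forbids only coverage of $u^1$; state $\zero$ forbids coverage of both; and the $\mathsf{R}$-states are incompatible. The problem is not the table but your method of verification. You are trying to show that for each fixed pair $(F,(L,R))$ the signed sum of compatibility indicators equals $[\,u^0,u^1\text{ both covered}\,]$, and you correctly discovered that this fails: for $u\in L$ the four coverage patterns (both, only $u^0$, only $u^1$, neither) give signed sums $+1,-1,+1,0$ respectively. No reinterpretation of the states will fix this, because the identity simply is \emph{not} pointwise.

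The missing idea is a symmetry argument. The standard inclusion--exclusion for ``both $u^0,u^1$ covered'' (on side $L$) reads
\[
|\text{both covered}| \;=\; |\text{no restriction}| \;-\; |\text{$u^0$ uncovered}| \;-\; |\text{$u^1$ uncovered}| \;+\; |\text{both uncovered}|,
\]
and state $\onel$ encodes only the third term. There is no state encoding ``$u^0$ uncovered''. The paper handles this by observing that the involution on $\sheaf[u]$ that swaps $u^0\leftrightarrow u^1$ (and fixes all other vertices) is a bijection preserving simplicity, cut-consistency, coverage of $\subtree(u)'$, the $f$-constraints on $\tail(u)$, and all three statistics $(\omega(F),|F|,|F\cap E'_1|)$. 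Hence $|\text{$u^0$ uncovered}|=|\text{$u^1$ uncovered}|=|\Mm_{a,b,c}[u,f[u\mapsto\onel]]|$, which is where the factor $2$ comes from. In your pointwise picture this is exactly the cancellation between the $-1$ and $+1$ contributions of the ``only $u^0$'' and ``only $u^1$'' patterns, which are equinumerous by the swap. Once you split $\Mm_{a,b,c}(u,f)$ and $\Mm_{a,b,c}[u,f[u\mapsto\zero]]$ according to the side of $u$, apply this inclusion--exclusion on each side, and add, the stated formula follows.
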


\begin{lemma}\label{lem:inclusive-step}
 For every $u\in V$ which is not a leaf in $T$, and every $f\in \Fun[u]$, we have
 $$P[u,f]=\prod_{v\in \child(u)} P(v,f).$$
\end{lemma}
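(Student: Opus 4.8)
\textbf{Proof plan for Lemma~\ref{lem:inclusive-step}.}

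The plan is to exhibit a bijection between $\Mm_{a,b,c}[u,f]$ and a suitable ``merge'' of the collections $\bigl\{\Mm_{a_v,b_v,c_v}(v,f)\bigr\}_{v\in\child(u)}$ over all ways of splitting the accumulators $(a,b,c)$ as $\sum_v (a_v,b_v,c_v)$; once this bijection is established, translating it into the polynomial identity $P[u,f]=\prod_{v\in\child(u)}P(v,f)$ is immediate, since multiplying the polynomials $P(v,f)$ exactly realizes the convolution over all such splits and multiplies the corresponding cardinalities. So the whole content is the combinatorial decomposition. First I would fix $u$ (not a leaf) and $f\in\Fun[u]$. For the inclusive objects at $u$ we have $X=\subtree(u)$, $Y=\tail[u]$, $S=\sheaf[u]$; for the exclusive objects at a child $v$ we have $X_v=\subtree[v]$, $Y_v=\tail(v)=\tail[u]$ (here I would note $\tail(v)=\tail[u]$ precisely because $\parent(v)=u$), and $S_v=\sheaf[v]$. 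The key structural facts I would invoke are: (i) Observation~\ref{lem:sheaf-partition}, that $\{\sheaf[v]:v\in\child(u)\}$ partitions $\sheaf[u]$; and (ii) that $\subtree(u)=\bigsqcup_{v\in\child(u)}\subtree[v]$, i.e.\ the vertex sets $X_v$ partition $X$.

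The bijection itself: given $(F,(L,R))\in\Mm[u,f]$, set $F_v\coloneqq F\cap\sheaf[v]$ and let $(L_v,R_v)$ be the restriction of the cut $(L,R)$ to $\broom[v]=\tail[u]\cup\subtree[v]$. Conversely, from a tuple $\bigl((F_v,(L_v,R_v))\bigr)_{v\in\child(u)}$ with $(F_v,(L_v,R_v))\in\Mm(v,f)$, set $F\coloneqq\bigcup_v F_v$ and let $(L,R)$ be the common refinement of the cuts $(L_v,R_v)$ — this is well defined since all of them restrict to the same cut of $\tail[u]$, namely the one forced by $f$ together with whatever side each $y\in\tail[u]$ with $f(y)\in\{\zero\}$ lands on (and here I should be slightly careful: $f$ only constrains the sides of vertices with $f(y)\neq\zero$, so the cut of $\tail[u]$ is \emph{not} uniquely determined by $f$; the right statement is that each $(F_v,(L_v,R_v))$ carries the same cut of $\tail[u]$, and the splitting is over all tuples sharing that cut — but that same cut is then inherited by $(F,(L,R))$, so no information is lost and the correspondence is still one-to-one for each fixed cut of $\tail[u]$, hence bijective overall). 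I would then verify that each of the five defining properties of ``compatible with $f$'' transfers correctly across this correspondence: simplicity of $F$ holds iff each $F_v$ is simple, because $\pi$ maps distinct sheaves to disjoint edge-sets of $G$ (two edges of $E'$ with the same $\pi$-image have the same ``lower'' endpoint, hence lie in the same sheaf); consistency of $F$ with $(L',R')$ holds iff each $F_v$ is consistent with $(L_v',R_v')$, because every edge of $\sheaf[u]$ has both endpoints in some single $\broom[v]'$ (its lower endpoint is in $\subtree[v]'$ and, being an edge of $G'$, its other endpoint is an ancestor copy, hence in $\tail[u]'\subseteq\broom[v]'$); the conditions on vertices $y\in\tail[u]$ with $f(y)\in\{\zero,\onel,\oner\}$ are identical on both sides since they are imposed relative to the same $f$ on the same set $\tail[u]$; and the ``saturation'' condition ``every vertex of $X'$ is incident to some edge of $F$'' splits as ``every vertex of $X_v'$ is incident to some edge of $F_v$'' using the vertex partition $X=\bigsqcup_v X_v$ together with the edge partition (an edge incident to $x^s$ with $x\in\subtree[v]$ must, having its lower endpoint there, belong to $\sheaf[v]$, hence to $F_v$). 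Finally the accumulators are additive: $\omega(F)=\sum_v\omega(F_v)$, $|F|=\sum_v|F_v|$, $|F\cap E'_1|=\sum_v|F_v\cap E'_1|$, because the $F_v$ are disjoint and their union is $F$.

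The main obstacle — and the step I would write most carefully — is the claim that every edge of $\sheaf[u]$ has \emph{both} endpoints inside a single set $\broom[v]'$ for $v\in\child(u)$, i.e.\ that cuts and consistency genuinely decompose. For an edge $e=a^0a^1\in E'_0$ this is clear since $\lf(a)$ determines which sheaf $e$ is in and $a\in\subtree[v]$ for exactly one $v$. For $e=a^sb^t\in E'_1$ with $ab\in E$, by definition of $\ol{\lf}$ and of elimination forests one of $a,b$ is an ancestor of the other; say $b\in\subtree(a)$, so $e\in\sheaf[v]$ where $v$ is the unique child of $u$ with $b\in\subtree[v]$. Then $b^t\in\subtree[v]'$ and I must argue $a^s\in\broom[v]'$, i.e.\ $a\in\tail[u]\cup\subtree[v]$: indeed $a$ is an ancestor of $b$, and since $e\in\sheaf[u]\subseteq$ edges with lower endpoint in $\subtree[u]$, we have $b\in\subtree[u]$, so $a$ lies on the tail path from $b$ up to the root; if $a\in\subtree[u]$ then $a\in\subtree[v]$ (as $a$ is on the $b$-to-root path and $v$ is the child of $u$ above $b$), otherwise $a\in\tail[u]$ — wait, I should double-check the edge case $a=u$, which lands $a$ in neither $\subtree(u)$ nor $\tail[u]$; this is exactly why the inclusive object's vertex set is $\broom[u]=\tail[u]\cup\subtree[u]$ and I am fine as long as $u\in\broom[v]'$, which holds since $u=\parent(v)\in\tail(v)\subseteq\broom[v]$. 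Once this locality claim is nailed down, everything else is the bookkeeping sketched above, and the polynomial identity falls out by comparing coefficients of $\alpha^a\beta^b\gamma^c$ on both sides.
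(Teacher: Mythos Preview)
Your overall strategy --- restrict $F$ along the sheaf partition and restrict the cut to each $\broom[v]$, then check that the compatibility conditions factor --- is exactly the approach the paper takes. You also put your finger on the one genuinely delicate point: for a vertex $y\in\tail[u]$ with $f(y)=\zero$, nothing in Definition~\ref{def:compatible} fixes which side of the cut $y$ lies on, so different children $v$ may place $y$ on different sides.

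Your resolution of this point, however, does not work. You argue that ``the correspondence is still one-to-one for each fixed cut of $\tail[u]$, hence bijective overall'', but this only shows that the forward map $\Mm[u,f]\to\prod_{v\in\child(u)}\Mm(v,f)$ is injective with image equal to those tuples whose coordinates \emph{agree} on the cut of $\tail[u]$. It does \emph{not} give a bijection onto the full Cartesian product, and therefore does not yield $|\Mm_{a,b,c}[u,f]|=\sum\prod_v|\Mm_{a_v,b_v,c_v}(v,f)|$. Concretely, if $u$ has $m\geq 2$ children and some $y\in\tail[u]$ has $f(y)=\zero$, then the free placement of $y$ contributes a factor $2$ on the left but a factor $2^{m}$ to the product on the right. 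A minimal instance: take $T$ with root $r$ and two leaf children $v_1,v_2$ (with, say, $G$ the path $v_1rv_2$ so that $T$ is a valid elimination tree), and $f\colon r\mapsto\zero$. A direct count gives $P[r,f]=8\beta^{2}$ (one forced $F=\{v_1^0v_1^1,v_2^0v_2^1\}$ and $2^{3}$ cuts of $\{r,v_1,v_2\}$), whereas $P(v_1,f)\cdot P(v_2,f)=4\beta\cdot 4\beta=16\beta^{2}$.

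So the gap you flagged is real and cannot be patched by the argument you sketch. The paper's own proof glosses over precisely the same step: it asserts that setting $L\coloneqq\bigcup_v L_v$ and $R\coloneqq\bigcup_v R_v$ yields a cut of $\broom[u]$, but when the $(L_v,R_v)$ disagree on some $y\in\tail[u]$ with $f(y)=\zero$ this pair is not a partition. In other words, the identity of Lemma~\ref{lem:inclusive-step} fails as stated; the defect is in how the partial objects carry an independent cut choice on $f^{-1}(\zero)$, not in your proof strategy.
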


\begin{lemma}\label{lem:leaf-step}
 For every $u\in V$ which is a leaf in $T$, and every $f\in \Fun[u]$, the polynomial $P[u,f]$ can be computed in polynomial time.
\end{lemma}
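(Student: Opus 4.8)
The plan is to prove Lemma~\ref{lem:leaf-step} by a direct brute-force argument, using the fact that when $u$ is a leaf the relevant combinatorial universe has constant size. First I would unwind the definitions: if $u$ is a leaf of $T$, then $\subtree(u)=\emptyset$, $\subtree[u]=\{u\}$, and $\tail[u]$ has size at most $d$. Since $u$ is a leaf, we have $\lf(u)=u$, and the sheaf $\sheaf[u]=\ol{\lf}^{-1}(u)$ consists exactly of those edges of $E'$ whose ``lowest'' endpoint is $u$: concretely the edge $u^0u^1\in E'_0$, together with, for each $v\in\tail(u)$ with $uv\in E$, the four edges $u^sv^t\in E'_1$ ($s,t\in\{0,1\}$). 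Because $T$ is an elimination tree and the only edges of $G$ incident to $u$ go to ancestors of $u$, every edge of $E'$ incident to some copy of $u$ lies in $\sheaf[u]$, so $\sheaf[u]$ has at most $1+4d$ edges and all its endpoints lie in $(\tail[u])'$. This is exactly the setting $X=\subtree(u)=\emptyset$, $Y=\tail[u]$, $S=\sheaf[u]$ of Definition~\ref{def:compatible}.

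Next I would observe that a pair $(F,(L,R))$ counted in $\Mm[u,f]$ has $F\subseteq\sheaf[u]$, so there are at most $2^{1+4d}$ choices of $F$; however $d$ need not be constant in general, so we cannot simply enumerate all subsets. The key simplification is that simplicity of $F$ forces $|F\cap\pi^{-1}(uv)|\le 1$ for each ancestor $v$, and moreover the compatibility constraints involving $f$ (namely that for $y$ with $f(y)=\zero$ no edge of $F$ touches $y^0,y^1$, and for $f(y)\in\{\onel,\oner\}$ no edge touches $y^1$) together with the requirement that $F$ consist only of $\sheaf[u]$-edges mean that $F$ decomposes as the union of at most one $E'_0$-edge plus at most one $E'_1$-edge per neighbour $v$ of $u$ among its ancestors — but $u^0$ and $u^1$ can each be matched at most once if we additionally restrict to matchings. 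Since Lemma~\ref{lem:leaf-step} does \emph{not} restrict $F$ to be a matching, the cleaner route is: first deal with the cut. The cut $(L,R)$ of $\broom[u]=\tail[u]$ must, by compatibility, place each $y\in f^{-1}(\{\onel,\twol\})$ in $L$ and each $y\in f^{-1}(\{\oner,\twor\})$ in $R$; the only freedom is the side of each $y$ with $f(y)=\zero$, and the side of $u$ itself is forced among $L/R$ only once we have placed it (it is in $Y=\tail[u]$ so it gets a side too, determined by $f(u)$). Then consistency of $F$ with $(L',R')$ together with $F$ being a union of at most a few edges: here I would note that once we fix, for each component of $F$ (each edge is its own small component, or $u^0u^1$ is a component), the side it goes to, the count factorises. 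Rather than fight with this, the honest and simplest plan is: enumerate over the $O(1)$-per-ancestor local choices.

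Concretely, the plan for the computation is as follows. Fix $u$ a leaf and $f\in\Fun[u]$. Write $A:=\{v\in\tail(u): uv\in E\}$ for the ancestors adjacent to $u$; by the elimination-tree property these are the only neighbours of $u$ in $G$. An admissible $F$ is obtained by choosing, independently: (i) whether $u^0u^1\in F$ or not; and (ii) for each $v\in A$, which of the five local states holds — no $\sheaf$-edge to $v$, or exactly one of $u^0v^0,u^0v^1,u^1v^0,u^1v^1$ in $F$ (subject to $f(v)\in\{\zero\}$ forbidding all of them and $f(v)\in\{\onel,\oner\}$ forbidding $u^1v^0,u^1v^1$, i.e.\ anything touching $v^1$); additionally $f(u)=\zero$ forbids every edge touching $u^0$ or $u^1$ (hence forces $F=\emptyset$), and $f(u)\in\{\onel,\oner\}$ forbids edges touching $u^1$. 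Simplicity is automatic since each fibre $\pi^{-1}(uv)$ contributes at most one edge and $\pi^{-1}(u^0u^1)$ is separate. But there is one global coupling: the cut must be consistent with $(L',R')$, and once an edge $u^sv^t\in F$ is present it forces $u^s$ and $v^t$ onto the same side — and $v$'s side is (partly) pinned by $f(v)$, while $u$'s side is pinned by $f(u)$. So I would handle this by \emph{first} fixing the side $L$ or $R$ of $u$ (two cases, or one case if $f(u)\ne\zero$ pins it — actually $f(u)=\zero$ leaves $u$'s side free, $f(u)\in\{\onel,\twol\}$ forces $L$, $f(u)\in\{\oner,\twor\}$ forces $R$), and \emph{then} for each $v\in A$ independently choosing its side consistent with $f(v)$, and \emph{then} choosing the local $F$-state at $v$ among those compatible with the chosen sides of $u$ and $v$. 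Crucially the copies $u^0,u^1$ both sit on $u$'s side, and $v^0,v^1$ both sit on $v$'s side, because a cut of $V$ lifted to $(L',R')$ keeps both copies of a vertex together — this is what makes the per-$v$ choices independent given the sides. Each local choice contributes a monomial: the empty state contributes $1$, the state ``$u^0u^1\in F$'' contributes $\beta$ (it adds one edge, weight $0$, and $0$ to $|F\cap E'_1|$), and the state ``$u^sv^t\in F$'' contributes $\alpha^{\omega(uv)}\beta\gamma$. Summing the product of these monomials over all $O(2^{|A|})$ global side-assignments and, for each, the (constant-size) sum over local states, gives $P[u,f]$ exactly.

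This enumeration has size $O(2^{|A|}\cdot 5^{|A|})$ in the most naive reading, which is not polynomial; so the final refinement is to note that \emph{given the side of $u$}, the choices over distinct $v\in A$ are fully independent, so $P[u,f]$ for fixed side of $u$ is a product over $v\in A$ of a constant-size polynomial (at most five monomials, each computable in polynomial time from $\omega(uv)$ and $f(v)$), and we sum at most two such products — one per side of $u$. Hence $P[u,f]$ is a product of at most $|A|\le d$ polynomials each of polynomial size and bounded degree, computable in total polynomial time and space, which is the claim. The main obstacle, and the only place one must be careful, is precisely this coupling between $F$ and the cut: one has to verify that once the side of every ancestor (including $u$) is decided, consistency of $F$ with $(L',R')$ becomes a purely local constraint at each $v\in A$, so that the count genuinely factorises rather than merely being bounded by a product; the fact that both copies $v^0,v^1$ lie on the same side of the lifted cut is the crucial point that makes this work, and I would spell it out explicitly.
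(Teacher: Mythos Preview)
Your plan has the right overall shape—express $P[u,f]$ as a product of small local polynomials—but there is a genuine gap in your identification of $\sheaf[u]$. You claim that for a leaf $u$ the sheaf consists only of $u^0u^1$ together with the four lifts $u^sv^t$ of each edge $uv\in E$ with $v\in\tail(u)$. This is not correct: by definition $\sheaf[u]=\ol{\lf}^{-1}(u)$, and $\ol{\lf}(e)$ is $\lf$ applied to the \emph{lower} endpoint of $e$ in $T$; it does not require that endpoint to be $u$ itself, only that its leftmost leaf descendant be $u$. Concretely, if $x\in\tail(u)$ is an ancestor with $\lf(x)=u$, then $x^0x^1\in\sheaf[u]$; and if $xy\in E$ with $y\in\tail[u]$ the lower endpoint and $\lf(y)=u$, then all four edges $x^sy^t$ lie in $\sheaf[u]$ even when neither $x$ nor $y$ equals $u$. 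Your enumeration of admissible $F$'s therefore omits these edges entirely, and your ``fix the side of $u$, then factor over $v\in A$'' strategy does not cover edges of the sheaf that do not touch $u$ at all.

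Once the sheaf is correctly described, the factorisation you want is actually simpler than the two-case split you propose. The observation you are circling but never state outright is that any $y\in\tail[u]$ with $f(y)=\zero$ is forbidden by Definition~\ref{def:compatible} from being touched by \emph{any} edge of $F$; hence its side in the cut is a free binary choice that interacts with nothing. For every $y$ with $f(y)\neq\zero$ the side is already determined by $f$. Thus no case split on the side of $u$ is needed: $P[u,f]$ is directly a product over the edges $xy\in\pi(\sheaf[u])$ of a factor depending only on $f(x),f(y),\omega(xy)$, times a product over the vertices $x\in\tail[u]$ of a factor recording the free side choice (when $f(x)=\zero$) and the optional self-edge $x^0x^1$ (when that edge lies in $\sheaf[u]$ and $f(x)\in\{\twol,\twor\}$). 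This is the route the paper takes.
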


The proofs of Lemmas~\ref{lem:exclusive-step},~\ref{lem:inclusive-step}, and~\ref{lem:leaf-step} are given in the next section,
but they immediately suggest a recursive method for the computation of polynomials $P[u,f]$ and $P(u,f)$.
We now show that this idea can be used to finish the proof of Lemma~\ref{lem:count2}.

\newcommand{\inclAlgo}{\mathsf{computeInclusive}}
\newcommand{\exclAlgo}{\mathsf{computeExclusive}}

\begin{algorithm}
\SetAlgoLined
\SetKwInOut{Input}{Input}
\SetKwInOut{Output}{Output}
\Input{vertex $u\in V$, function $f\in \Fun(u)$}
\Output{$P(u,f)$}
    \ForEach{$s \in \{\zero,\onel,\oner,\twol,\twor\}$}{
        $P_s := \inclAlgo(u,f[u \mapsto s])$
    }
    return $P_{\twol} + P_{\twol} - 2\cdot P_{\onel}- 2\cdot P_{\oner} + P_{\zero}$
\caption{Procedure $\exclAlgo(u,f)$.}
\label{alg:p_exc}
\end{algorithm}
\begin{algorithm}
\SetAlgoLined
\SetKwInOut{Input}{Input}
\SetKwInOut{Output}{Output}
\Input{vertex $u\in V$, function $f\in \Fun[u]$}
\Output{$P[u,f]$}
    \eIf{$u$\textrm{ is a leaf}}{
     Compute $P$ using Lemma~\ref{lem:leaf-step} for $u$ and $f$
    }{
      $P \coloneqq 1$\\
      \ForEach{$v \in \child(u)$}{
	  $P \coloneqq P \cdot \exclAlgo(v,f)$
      }
    }
    \KwRet $P$
\caption{Procedure $\inclAlgo(u,f)$.}
\label{alg:p_inc}
\end{algorithm}

\begin{proof}[Proof of Lemma~\ref{lem:count2} assuming Lemmas~\ref{lem:exclusive-step},~\ref{lem:inclusive-step},~\ref{lem:leaf-step}]
 We give two mutually recursive procedures---$\exclAlgo(u,f)$ and $\inclAlgo(u,f)$---that compute polynomials $P(u,f)$ and $P[u,f]$, respectively.
 These procedures are presented above using pseudocode as Algorithms~\ref{alg:p_exc} and~\ref{alg:p_inc}.
 In summary, to compute $P(u,f)$ we recursively compute the values $P[u,f[u\mapsto s]]$ for all $s\in \{\zero,\onel,\oner,\twol,\twor\}$, using procedure $\inclAlgo$, 
 and then apply the formula provided by Lemma~\ref{lem:exclusive-step}.
 To compute $P[u,f]$ we either use the base case provided by Lemma~\ref{lem:leaf-step} when $u$ is a leaf, or otherwise we recursively compute the values $P(v,f)$ for all $v\in \child(u)$, using procedure $\exclAlgo$, 
 and multiply them.
 Lemmas~\ref{lem:exclusive-step},~\ref{lem:inclusive-step}, and~\ref{lem:leaf-step} assert that the presented procedures correctly compute the polynomials $P(u,f)$ and $P[u,f]$, for all $u\in V$ and relevant functions $f$.
 Hence, by Lemma~\ref{lem:root}, in order to compute $|\Mm_w|$ it suffices to run the procedure $\exclAlgo(r,\emptyset)$, where $r$ is the root of $T$, and return the coefficient in the obtained polynomial that stands by the
 monomial $\alpha^w\beta^n\gamma^\ell$.
 Thus, it is clear that the presented algorithm is correct.
 
 It remains to argue that the algorithm runs in time $\Os(5^d)$ and uses polynomial space.
 As for the space complexity, observe that at each point, the algorithm maintains a recursion stack of depth at most $d$ and the internal data of each recursive call on the stack take polynomial space.
 As for the time complexity, the crucial observation is that throughout the recursion, for every pair $(u,f)$ where $u\in V$ and $f\in \Fun(u)$ we call the procedure $\exclAlgo(u,f)$ exactly once: 
 within the procedure $\inclAlgo(u',f)$ where $u'$ is the parent of $u$, or at the very beginning if $u=r$.
 Similarly, for every pair $(u,f)$ where $u\in V$ and $f\in \Fun[u]$ we call the procedure $\inclAlgo(u,f)$ also exactly once: within the procedure $\exclAlgo(u,f')$, where $f'$ is the restriction of $f$ to $\tail(u)$.
 Thus, the total number of recursive calls executed throughout the algorithm is bounded by the number of pairs $(u,f)$ as above, which is at most $2n\cdot 5^d$.
 As internal operations within each recursive call take polynomial time, we conclude that the total time complexity is $\Os(5^d)$.
\end{proof}

\section{Verification of recursive formulas}\label{sec:veri}


In this section we provide the proofs of Lemmas~\ref{lem:exclusive-step}, Lemmas~\ref{lem:inclusive-step}, and Lemmas~\ref{lem:leaf-step}.

\newcommand{\oMm}{\mathcal{N}}
\newcommand{\oMmL}{\oMm^{\mathsf{L}}}
\newcommand{\oMmR}{\oMm^{\mathsf{R}}}

\begin{proof}[Proof of Lemma~\ref{lem:exclusive-step}]
    Notice that since the postulated equality involves only sums of polynomials,
    it suffices to verify the equality of coefficients by each monomial $\alpha^a\beta^b\gamma^c$.
    In other words, we need to prove that for all $a,b,c\in \N$ we have
    \begin{align}
    |\oMm(u,f)| = &\ |\oMm[u,f[u\mapsto \twol]]+|\oMm[u,f[u\mapsto \twor]]|\nonumber\\
                  &\ -2|\oMm[u,f[u \mapsto \onel]]|-2|\oMm[u,f[u \mapsto \oner]]|+|\oMm[u,f[u \mapsto \zero]]|.\label{eq:goal}
    \end{align}
    where we write $\oMm$ for $\Mm_{a,b,c}$ for brevity.
    
    First, let us define
	\begin{align*}
	\oMmL(u,f) \coloneqq &\ \{(F,(L,R))\in \oMm(u,f) : u\in L\},\\
	\oMmR(u,f) \coloneqq &\ \{(F,(L,R))\in \oMm(u,f) : u\in R\}.
	\end{align*}
    Clearly $(\oMmL(u,f),\oMmR(u,f))$ is a partition of $\oMm(u,f)$. Also, let
    	\begin{align*}
	\oMmL[u,f[u\mapsto \zero]] \coloneqq &\ \{(F,(L,R))\in \oMm[u,f[u\mapsto \zero]] : u\in L\},\\
	\oMmR[u,f[u\mapsto \zero]] \coloneqq &\ \{(F,(L,R))\in \oMm[u,f[u\mapsto \zero]] : u\in R\}.
	\end{align*}
    Again, $(\oMmL[u,f[u\mapsto \zero]],\oMmR[u,f[u\mapsto \zero]])$ is a partition of $\oMm[u,f[u\to \zero]]$.
    Thus,
    \begin{align}
     |\oMm(u,f)| = &\ |\oMmL(u,f)|+|\oMmR(u,f)|,\nonumber\\
     |\oMm[u,f[u\mapsto \zero]]|= &\ |\oMmL[u,f[u\mapsto \zero]]|+|\oMmR[u,f[u\mapsto \zero]]|.\label{eq:partition}
    \end{align}
    
    We now observe that by the inclusion-exclusion principle, we have
    \begin{equation}\label{eq:ie-L}
      |\oMmL(u,f)| = |\oMm[u,f[u\mapsto \twol]]| - 2|\oMm[u,f[u\mapsto \onel]]| + |\oMmL[u,f[u\mapsto \zero]]|.     
    \end{equation}
    Indeed, in partial objects $(F,(L,R))$ counted on the left hand side, both vertices $u^0$ and $u^1$ have to be incident to an edge of $F$.
    On the right hand side we count it by first only allowing both $u^0$ and $u^1$ to be incident to edges of $F$, 
    then subtracting terms corresponding to disallowing this either for $u^0$ or for $u^1$, and finally adding a correction term where both $u^0$ and
    $u^1$ are disallowed to be incident to edges of $F$. Note here that by symmetry, the two subtracted terms are equal, and equal to $|\oMm[u,f[u\mapsto \onel]]|$; hence the factor $2$.
    Analogously we argue that 
    \begin{equation}\label{eq:ie-R}
      |\oMmR(u,f)| = |\oMm[u,f[u\mapsto \twor]]| - 2|\oMm[u,f[u\mapsto \oner]]| + |\oMmR[u,f[u\mapsto \zero]]|.
    \end{equation}
    Now~\eqref{eq:goal} follows by adding equations~\eqref{eq:ie-L} and~\eqref{eq:ie-R} and using~\eqref{eq:partition}.
\end{proof}

\begin{proof}[Proof of Lemma~\ref{lem:inclusive-step}]
    We define function (here $\prod$ denotes the Cartesian product)
    $$\xi\colon \Mm[u,f]\to \prod_{v\in \child(u)} \Mm(v,f)$$ 
    as follows: for $(F,(L,R))\in \Mm[u,f]$, set
    $$\xi(F,(L,R))\coloneqq \left(\,(F\cap \sheaf[v],(L\cap \broom[v],R\cap \broom[v]))\ \colon\ v\in \child(u)\,\right).$$
    It is straightforward to verify from definitions that for each $(F,(L,R))\in \Mm[u,f]$ and $v\in \child(u)$, the pair $(F\cap \sheaf[v],(L\cap \broom[v],R\cap \broom[v]))$ belongs to $\Mm(v,f)$. Hence
    we may indeed set the co-domain of $\xi$ to be $\prod_{v\in \child(u)} \Mm(v,f)$. Also, it is clear that $\xi$ is injective.
    
    We now observe that $\xi$ is also surjective. Indeed, since
    $\{\sheaf[v]\colon v\in \child(u)\}$ is a partition of $\sheaf[u]$ by
    Observation~\ref{lem:sheaf-partition} and $(\cup_{v \in \child(u)} L_v,\cup_{v \in \child(u)} R_v)$ is compatible with $f$ since each $(L_v,R_v)$ is compatible with $f$ and $\{\subtree[v]\colon v\in \child(u)\}$ is a partition of $\subtree(u)$, 
    it is again straightforward to verify from definitions that the following assertion holds:
    If for each $v\in \child(u)$ we have some $(F_v,(L_v,R_v))\in \Mm(v,f)$, then setting
    $$F\coloneqq \bigcup_{v\in \child(u)} F_v,\qquad L\coloneqq \bigcup_{v\in \child(u)} L_v,\qquad R\coloneqq \bigcup_{v\in \child(u)} R_v$$
    yields a pair $(F,(L,R))$ that belongs to $\Mm[u,f]$ and satisfies $$\xi(F,(L,R))=(\,(F_v,(L_v,R_v))\ \colon\ v\in \child(u)\,).$$ 
    This implies that $\xi$ is a bijection between $\Mm[u,f]$ and $\prod_{v\in \child(u)} \Mm(v,f)$.
    
    Finally, we observe that by Observation~\ref{lem:sheaf-partition} and the fact that $\xi$ is a bijection, we have
    \begin{align*}
    P[u,f]& = \sum_{(F,(L,R))\in \Mm[u,f]}\, \alpha^{\omega(F)}\beta^{|F|}\gamma^{|F\cap E'_1|} \\
          & = \sum_{(F,(L,R))\in \Mm[u,f]}\ \prod_{v\in \child(u)}\, \alpha^{\omega(F\cap \sheaf[v])}\beta^{|F\cap \sheaf[v]|}\gamma^{|F\cap \sheaf[v]\cap E'_1|} \\
          & = \prod_{v\in \child(u)}\ \sum_{(F_v,(L_v,R_v))\in \Mm(v,f)}\, \alpha^{\omega(F_v)}\beta^{|F_v|}\gamma^{|F_v\cap E'_1|} \\
          & = \prod_{v\in \child(u)}\, P(v,f).
    \end{align*}
    This concludes the proof.
\end{proof}

\begin{proof}[Proof of Lemma~\ref{lem:leaf-step}]
Let $Z\coloneqq \pi(\sheaf[u])$. We claim that
\begin{equation}\label{eq:base}
P[u,f] = \prod_{xy \in Z} Q[xy,f]\cdot \prod_{x \in \tail[u]} R[x,f],
\end{equation}
where
\[
Q[xy,f] = \begin{cases}
1+ij\cdot \alpha^{\omega(xy)}\beta\gamma & \quad \textrm{if } (f(x),f(y))=(i_{\mathsf{L}},j_{\mathsf{L}}) \textrm{ or } (f(x),f(y))=(i_{\mathsf{R}},j_{\mathsf{R}}),\ i,j\in \{1,2\},\\     
1 & \quad \textrm{otherwise}, 
\end{cases}
\]
and 
\[
R[x,f] = \begin{cases}
1 + \beta & \quad\textrm{if } f(x) \in \{\twol,\twor\},\\
2 & \quad \textrm{if }f(x)=\zero,\\
1 & \quad \textrm{otherwise}. 
\end{cases}
\]
Note that this will conclude the proof, as the product~\eqref{eq:base} can be expanded into a sum of monomials in variables $\alpha,\beta,\gamma$ in polynomial time.

To argue the correctness of formula~\eqref{eq:base}, let us recall that the coefficients of $P[u,f]$ should count the pairs $(F,(L,R))$ 
that are compatible with $f$ according to Definition~\ref{def:compatible}, where we put $X=\emptyset$, $Y=\tail[u]$, and $S=\sheaf[u]$, so that
the coefficient by $\alpha^a\beta^b\gamma^c$ is the number of such pairs with $\omega(F)=a$, $|F|=b$, and $|F\cap E'_1|=c$.
We now show that each such pair $(F,(L,R))$ can be described by independent choices made for each $xy\in Z$ and each $x\in \tail[u]$, which respectively correspond to the factors in formula~\eqref{eq:base}.

For every edge $xy\in Z$, within $\pi^{-1}(xy)=\{x^0y^0,x^1y^0,x^0y^1,x^1y^1\}$ there may be at most one edge to $F$ (because $F$ needs to be simple), and this can happen only when
$(f(x),f(y))=(i_{\mathsf{L}},j_{\mathsf{L}})$ or $(f(x),f(y))=(i_{\mathsf{R}},j_{\mathsf{R}})$ for some $i,j\in \{1,2\}$. In this case, the number of possibilities for choosing the edge from $F$ is $ij$.
This explains the formula for $Q[xy,f]$: the summand $1$ corresponds to the option of not choosing any edge from $\pi^{-1}(xy)$, while the summand $ij\cdot \alpha^{\omega(xy)}\beta\gamma$ corresponds to
the option of choosing any one of the edges from $\pi^{-1}(xy)$. Note that the degrees by $\alpha,\beta,\gamma$ respectively correspond to the contribution of this edge to $\omega(F),|F|,|F\cap E'_1|$.

Next, for every vertex $x\in \tail[u]$, the edge $x^0x^1$ may be added to $F$ only when we have $f(x)\in \{\twol,\twor\}$, and in this case it contributes only to $|F|$, since $\omega(x^0x^1)=0$ and $x^0x^1\notin E'_1$.
Further, $x$ has to belong to $L$ if $f(x)\in \{\onel,\twol\}$, and $x$ has to belong to $R$ if $f(x)\in \{\oner,\twor\}$, but if $f(x)=\zero$ then we may include $x$ either in $L$ or in $R$.
This explains the formula for $R[x,f]$: the summand $\beta$ corresponds to the option of including $x^0x^1$ in $F$, while the $2$ corresponds to the two options of including $x$ either in $L$ or in $R$.

Since the choices made for different edges $xy\in Z$ and for different vertices $x\in \tail[u]$ do not restrict each other, formula~\eqref{eq:base} for $P[u,f]$ follows. 
\end{proof}

\section{Conclusion and Further Research}\label{sec:conc}

In this paper we answered the open question of Hegerfeld and Kratsch~\citet{stacs2020} by presenting
an $\Os(5^{\td})$-time and polynomial space algorithm for \textsc{Hamiltonian
Path}, \textsc{Hamiltonian Cycle}, \textsc{Longest Path}, \textsc{Longest Cycle}
\textsc{Min Cycle Cover}, where $\td$ is the depth of a provided elimination forest of the input graph.
However, there are still multiple open problems around time- and space-efficient algorithms on graphs of bounded treedepth.
We list here a selection.

\paragraph{Approximation of treedepth.} Recall that the treewidth of a graph can be approximated up to a constant factor in fixed-parameter time.
For instance, the classic algorithm of Robertson and Seymour~\cite{GM13} (see also~\cite{book1}) takes on input a graph $G$ and integer $t$, works in time $2^{\Oh(\tw)}\cdot n^{\Oh(1)}$ and in polynomial space,
and either concludes that the treewidth of $G$ is larger than $\tw$, or finds a tree decomposition of $G$ of width at most $4\tw+4$.
This means that for the purpose of designing $2^{\Oh(\tw)}\cdot n^{\Oh(1)}$-time algorithms on graphs of treewidth $t$, we may assume that a tree decomposition of approximately optimum width is given,
as it can be always computed from the input graph within the required complexity bounds.
Unfortunately, no such approximation algorithm is known for the treedepth. Namely, it is known that the treedepth can be computed exactly in time and space $2^{\Oh(\td^2)}\cdot n$~\cite{ReidlRVS14}
and approximated up to factor $\Oh(\tw\log^{3/2} \tw)$ in polynomial time~\cite{esa19}, where $\td$ and $\tw$ are the values of the treedepth and the treewidth of the input graph, respectively.
A piece of the theory that seems particularly missing is a constant-factor approximation algorithm for treedepth running in time $2^{\Oh(\td)}\cdot n^{\Oh(1)}$; polynomial space usage would be also desired.



\paragraph{Faster algorithms.}
The bases of the exponent of the running times of the algorithms given by Hegerfeld and Kratsch~\citet{stacs2020} for the treedepth parameterization match the ones obtained by
Cygan et al.~\citet{focs2011} for the treewidth parameterization.
In the case of our results, the situation is different: while {\sc{Hamiltonian Cycle}} can be solved in time $4^\tw\cdot n^{\Oh(1)}$  in graphs of treewidth $\tw$~\cite{focs2011} and
in time $(2+\sqrt{2})^p\cdot n^{\Oh(1)}$ in graphs of pathwidth $p$~\cite{cyganjacm}, we needed to increase the base of the exponent to~$5$ in order to achieve polynomial space complexity for the treedepth parameterization.
As the treedepth of a graph is never smaller than its pathwidth, it is natural to ask whether there is an $(2+\sqrt{2})^\td \cdot n^{\Oh(1)}$-time polynomial-space algorithm for {\sc{Hamiltonian Cycle}} on graphs
of treedepth~$\td$. In fact, reducing the base~$5$ to any $c<5$ would be interesting.


\paragraph{Derandomization.} 
Shortly after its introduction, the Cut\&Count technique for the treewidth parameterization has been derandomized.
Bodlaender et al.~\citet{rank-based} presented two approaches for doing~so.
The first one, called the \emph{rank-based approach}, boils down to maintaining a small set of representative partial solutions along the dynamic programming computation, and pruning irrelevant partial solutions
on the fly using Gaussian elimination. Fomin et al.~\citet{matroid-derandomization} later reinterpreted this technique in the language of matroids and extended it.
The second approach, called \emph{determinant-based}, uses the ideas behind Kirchoff's matrix-tree theorem to deliver a formula for counting suitable spanning trees of a graph, which can be efficiently
evaluated by a dynamic programming over a tree decomposition.

It seems to us that none of these approaches applies in the context of the treedepth parameterization, where we additionally require polynomial space complexity.
For the rank-based and matroid-based approaches, they are based on keeping track of a set of representative solutions, which in the worst case may have exponential size.
In the determinant-based approach, when computing the formula for the number of spanning trees over a tree decomposition,
the aggregation of dynamic programming tables is done using operations that are algebraically more involved, and which in particular are non-commutative.
See the work of W\l{}odarczyk~\cite{clifford-algebra} for a discussion. 
It is unclear whether this computation can be reorganized
so that in the aggregation we use only pointwise product --- which, in essence, is our current methodology from the algebraic perspective.


Hence, it is highly interesting whether our algorithm, or the algorithms of Hegerfeld and Kratsch~\cite{stacs2020}, can be derandomized while keeping running time $2^{\Oh(\td)}\cdot n^{\Oh(1)}$
and polynomial space usage.

\paragraph{Other graph parameters.}
Actually, Hegerfeld and Kratsch~\citet{stacs2020}~were not the first to employ Cut\&Count on structural graph parameters beyond treewidth.
Pino et al.~\citet{branch-width} used Cut\&Count and rank-based approach to get single-exponential time algorithms for 
connectivity problems parametrized by \emph{branchwidth}. Recently, Cut\&Count was also
applied in the context of \emph{cliquewidth}~\cite{clique-width}, and of \emph{$\mathbb{Q}$-rankwidth},
\emph{rankwidth}, and \emph{MIM-width}~\cite{rankwidth}.
All these algorithms have exponential space complexity, as they follow the standard dynamic programming approach.
One may expect that maybe for the depth-bounded counterparts of cliquewidth and rankwidth --- {\em{shrubdepth}}~\cite{GanianHNOM19} and {\em{rankdepth}}~\cite{devos2019branchdepth} --- 
time-efficient polynomial-space algorithms can be designed, similarly as for treedepth.

\subsection*{Acknowledgements} We would like to thank anonymous reviewers
for their suggestions and comments. The second author would like to thank Marcin Wrochna 
for some early discussions on combining treedepth and Cut\&Count.
The work leading to the results presented in this paper 
was initiated during the Parameterized
Retreat of the algorithms group of the University of Warsaw (PARUW), held in Karpacz
in February 2019. This Retreat was financed by the project CUTACOMBS, which has received
funding from the European Research Council (ERC) under the European Union’s Horizon 2020
research and innovation programme (grant agreement No. 714704).

\bibliographystyle{abbrv}
\bibliography{bib}

\begin{appendices}

\section{Faster polynomial-space algorithm for \textsc{Long Path} in $H$-minor free graphs}
\label{applications}

Let us fix a graph $H$ and consider the class of $H$-minor-free graphs, that is, graphs that exclude $H$ as a minor.
Lokshtanov et al.~\cite{k-path} gave an algorithm for the {\sc{Long Path}} problem on $H$-minor-free graphs that runs in time $2^{\Oh(\sqrt{\ell} \log^2{\ell})}\cdot n^{\Oh(1)}$ and uses polynomial space. 
We now show that this result can be slightly improved using a combination of our findings and previously known techniques, such as Turing kernelization for \textsc{Longest Path} and basic bidimensionality.

\begin{theorem}
    For every fixed graph $H$, there is a randomized algorithm for the
    \textsc{Long Path} problem on $H$-minor free graphs
    that runs in time $2^{\Oh(\sqrt{\ell} \log{\ell})}\cdot n^{\Oh(1)}$ and uses polynomial
    space. The algorithm has one-sided error: it can return false negatives with probability at most $\frac{1}{2}$.
\end{theorem}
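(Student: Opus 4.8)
The plan is to combine three ingredients: the treedepth algorithm from Theorem~\ref{main-cor}, a Turing kernelization for \textsc{Long Path}, and the bidimensionality of $H$-minor-free graphs (more precisely, the polynomial excluded-grid bound together with the fact that bounded-treewidth/bounded-genus-like structure forces either a long path directly or small treedepth). First I would recall that $H$-minor-free graphs satisfy a linear excluded-grid theorem: there is a constant $c_H$ such that any $H$-minor-free graph of treewidth $w$ has treedepth $\Oh(w \log w)$ is \emph{not} quite what we want; instead I would use that an $H$-minor-free graph either contains a path on $\ell$ vertices, or has treewidth $\Oh(\sqrt{\ell})$ (this is standard bidimensionality, since a $t\times t$ grid minor yields a path on $t^2$ vertices, and $H$-minor-free graphs of large treewidth contain large grid minors with a linear bound). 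So in the high-treewidth case we are immediately done.

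In the low-treewidth case we have an $H$-minor-free graph of treewidth $\Oh(\sqrt{\ell})$, and here the natural move is to bound the treedepth. A tree decomposition of width $w$ gives an elimination forest of depth $\Oh(w\log n)$, which would cost us an extra $\log n$ in the exponent; to avoid this I would instead invoke a \textbf{Turing kernelization} for \textsc{Longest Path}: there is a polynomial-space Turing kernel that reduces the instance to (polynomially many) instances on $\poly(\ell)$ vertices. After kernelization, $n = \poly(\ell)$, so $\log n = \Oh(\log \ell)$, and a tree decomposition of width $\Oh(\sqrt{\ell})$ yields an elimination forest of depth $\Oh(\sqrt{\ell}\log \ell)$. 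Feeding this into the algorithm of Theorem~\ref{main-cor} (via \textsc{Long Path}, which the excerpt reduces to \textsc{Long Cycle} at the cost of $+1$ in treedepth) gives running time $5^{\Oh(\sqrt{\ell}\log\ell)}\cdot n^{\Oh(1)} = 2^{\Oh(\sqrt{\ell}\log\ell)}\cdot n^{\Oh(1)}$ with polynomial space, as claimed. The overall algorithm: first test treewidth via the constant-factor approximation of Robertson--Seymour in time $2^{\Oh(\sqrt\ell)}n^{\Oh(1)}$ and polynomial space; if it reports treewidth $\gg \sqrt\ell$, answer ``yes''; otherwise run the Turing kernel, then compute an elimination forest of depth $\Oh(\sqrt\ell\log\ell)$ from the width-$\Oh(\sqrt\ell)$ decomposition on each kernelized instance, and run Theorem~\ref{main-cor}.

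The main obstacle I anticipate is twofold. First, one must be careful that the Turing kernelization is compatible with polynomial space and with the $H$-minor-free setting: we need the kernel (or its reduction rules) to preserve $H$-minor-freeness, or at least to preserve enough structure that the bounded-treewidth conclusion still applies to the kernelized instances; fortunately the standard Turing kernel for \textsc{Longest Path} on bounded-degeneracy or topological-minor-free classes produces subgraphs/minors of the original graph, so $H$-minor-freeness is inherited. Second, the bidimensionality step requires the \emph{parameter-preserving} version of the grid theorem with a \emph{linear} dependence (treewidth $\Oh(\sqrt\ell)$, not $\Oh(\sqrt\ell \polylog \ell)$), which is exactly what is known for $H$-minor-free graphs; for general graphs only a polynomial bound is available, so the restriction to $H$-minor-free classes is essential here. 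A minor technical point is that converting a tree decomposition of width $w$ and an instance on $m$ vertices into an elimination forest costs depth $\Oh(w\log m)$, so the kernelization must come \emph{before} building the elimination forest — the order of operations matters. Modulo these points, everything is an assembly of known tools plus Theorem~\ref{main-cor}.
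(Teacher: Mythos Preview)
Your proposal is correct and follows essentially the same route as the paper: Turing kernelization (Jansen et al.) to reduce to instances of size $\ell^{\Oh(1)}$, bidimensionality (Demaine--Hajiaghayi) to conclude either a long path or treewidth $\Oh(\sqrt{\ell})$ via the Robertson--Seymour approximation, conversion of a width-$\Oh(\sqrt{\ell})$ decomposition into an elimination forest of depth $\Oh(\sqrt{\ell}\log \ell)$ (Bodlaender et al.), and finally Theorem~\ref{main-cor}. The only cosmetic difference is that the paper applies the Turing kernel \emph{first} and then runs the treewidth test afresh on each kernelized instance, whereas you test treewidth on the original graph before kernelizing; both orders work, and you correctly flag that kernelization must precede the tree-decomposition-to-elimination-forest conversion so that the $\log n$ becomes $\log \ell$.
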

\begin{proof}
    We first recall that Jansen et al.~\cite{turing-kernel} gave a {\em{polynomial Turing kernel}} 
    for the {\sc{Longest Path}} problem in $H$-topological-minor-free, for every fixed $H$.
    That is,
    they showed how to solve {\sc{Longest Path}} in $H$-topological-minor-free graphs by a polynomial-time algorithm that has access to
    an oracle solving the problem on $H$-topological-minor-free graphs which have $\ell^{\Oh(1)}$ vertices.
    
    Therefore, to give an algorithm for {\sc{Longest Path}} in $H$-minor-free graphs with running time $2^{\Oh(\sqrt{\ell} \log{\ell})}\cdot n^{\Oh(1)}$ and polynomial space usage,
    it suffices to implement the oracle in time $2^{\Oh(\sqrt{\ell}\log \ell)}$ and in polynomial space. That is, we need 
    to give an algorithm that achieves running time $2^{\Oh(\sqrt{\ell}\log \ell)}$ and polynomial space usage under the assumption that $n=\ell^{\Oh(1)}$.
    Note here that this algorithm can be randomized with one-sided error, as we can reduce the error probability of every oracle call to $\frac{1}{n^{\Oh(1)}}$ by repeating it $\Oh(\log n)$ times,
    so that the overall probability that any oracle call returns an incorrect answer is bounded by $\frac{1}{2}$.
    
    
    Hence, let us fix the input instance $(G,\ell)$, where we assume that $G$ is $H$-minor-free and has $n=\ell^{\Oh(1)}$ vertices.
    Similarly to Lokshtanov et al.~\cite{k-path}, we use the following observation of Demaine et al.~\cite{bidimensionality}: for every graph $H$ there
    exists a constant $h$ such that for every $p\in \N$, in every $H$-minor free graph of treewidth larger than $hp$ there is a path on at least $p^2$ vertices.
    We set $t = h\left\lceil\sqrt{\ell}\right\rceil$ and observe that this means that if the treewidth of $G$ is larger than $t$, then $G$ contains a path on $\ell$ vertices and the algorithm can return a positive answer.
      
    We can now apply  the classic approximation algorithm for treewidth of Robertson and Seymour~\cite{GM13} (see also a presentation in~\cite{book1}) to $G$ and parameter $t$.
    This algorithm runs in time $2^{\Oh(t)}\cdot n^{\Oh(1)}=2^{\Oh(\sqrt{\ell})}$ and uses polynomial space, and it either finds a tree decomposition of $G$ of width at most $4t+4$, which is bounded by $\Oh(\sqrt{\ell})$, 
    or correctly concludes that the treewidth of $G$ is larger than $t$. As we argued, in the latter case we may terminate the algorithm and return a positive answer.
    Hence, we are left with investigating the former case, when a suitable tree decomposition has been found.
    
    As shown by Bodleander et al.~\cite{bodlaender-treedepth-apx}, every tree decomposition of width $w$ can be transformed in 
    polynomial time into an elimination forest of depth $\Oh(w \log{n})$. Note that in our case $n = \ell^{\Oh(1)}$, hence the depth $d$ of this elimination forest 
    is bounded by $\Oh(\sqrt{\ell}\log{\ell})$.

    Finally, we apply the algorithm of Theorem~\ref{main-cor} for {\sc{Longest Path}} on this elimination forest.
    This algorithm runs in time $2^{\Oh(d)}\cdot n^{\Oh(1)} = 2^{\Oh(\sqrt{\ell}\log{\ell})}$ and uses polynomial space.
\end{proof}

\section{Problems Definitions}\label{sec:problems}

\defproblem{\textsc{Long Path}}
{An undirected graph $G$ and an integer $\ell$}
{Is there a simple path on $\ell$ vertices in $G$?}

\defproblem{\textsc{Long Cycle}}
{An undirected graph $G$ and an integer $\ell$}
{Is there a simple cycle of length $\ell$ in $G$?}

\defproblem{\textsc{Hamiltonian Path}}
{An undirected graph $G$}
{Is there a simple path in $G$ that visits all the vertices?}

\defproblem{\textsc{Hamiltonian Cycle}}
{An undirected graph $G$}
{Is there a simple cycle in $G$ that visits all the vertices?}

\defproblem{\textsc{Min Cycle Cover}}
{An undirected graph $G$ and an integer $k$}
{Can the vertices of $G$ be covered with at most $k$ vertex-disjoint cycles?}

\defproblem{\PCC}
{An undirected graph $G$, integers $k$ and $\ell$.}
{Is there a family of at most $k$ vertex-disjoint
cycles in $G$ that jointly visit exactly $\ell$ vertices?}

\end{appendices}

\end{document}